\newcommand{\C}{\mathbb C}
\newcommand{\T}{\mathcal{T}}
\def\d{\partial}
\def\CP1{\mathbb{C}\mathrm{P}^1}
\def\un{{1\!\! 1}}
\def\r{\mathfrak{r}}
\def\s{\mathfrak{s}}
\newtheorem{theorem}{Theorem}
\newtheorem{lemma}[theorem]{Lemma}
\newtheorem{corollary}[theorem]{Corollary}
\theoremstyle{definition}
\title[On deformations of quasi-Miura transformations]{On deformations of quasi-Miura transformations and the Dubrovin-Zhang bracket}
\author{A. Buryak}
\address{A.~Buryak:\newline
Department of Mathematics,
University of Amsterdam, \newline
P.~O.~Box 94248, 1090 GE Amsterdam, 
The Netherlands\newline 
\indent and\newline
Department of Mathematics, Moscow State University,\newline
Leninskie gory, 119992 GSP-2 Moscow, Russia} 
\email{a.y.buryak@uva.nl, buryaksh@mail.ru}
\author{H. Posthuma}
\address{H.~Posthuma:\newline
Department of Mathematics,
University of Amsterdam, \newline
P.~O.~Box 94248, 1090 GE Amsterdam, 
The Netherlands}
\email{h.b.posthuma@uva.nl}
\author{S. Shadrin}
\address{S.~Shadrin:\newline
Department of Mathematics,
University of Amsterdam, \newline
P.~O.~Box 94248, 1090 GE Amsterdam, 
The Netherlands}
\email{s.shadrin@uva.nl}
\begin{document}

\begin{abstract}
In our recent paper we proved the polynomiality of a Poisson bracket for a class of infinite-di\-men\-sional
Hamiltonian systems of PDE's associated to semi-simple Frobenius structures. In the conformal (homogeneous) case, these systems are exactly the hierarchies of Dubrovin-Zhang, and the bracket is the first Poisson structure of their hierarchy.

Our approach was based on a very involved computation of a deformation formula for the bracket with respect to the Givental-Y.-P.~Lee Lie algebra action. In this paper, we discuss the structure of that deformation formula. In particular, we reprove it using a deformation formula for weak quasi-Miura transformation that relates our hierarchy of PDE's with its dispersionless limit. 
\end{abstract}

\maketitle

\tableofcontents

\section{Introduction}

In~\cite{DubZha2}, Dubrovin and Zhang proposed an axiomatic construction of a bi-Hamiltonian integrable hierarchy on the loop space of a confomal semi-simple Frobenius manifold. In~\cite{BurPosSha}, we have proposed a simple construction that associates a infinite Hamiltonian system of PDEs to an arbitrary cohomological field theory under some analytic assumptions on its partition function. In the conformal semi-simple case both constructions give the same system of PDEs, though it is clear that the second one is in some sense weaker since it requires more input and says nothing about integrability. Still it has a big advantage that it is applied in a much more general case.

One of the key concepts in both constructions is a weakened version of a Miura transformation, so-called quasi-Miura transformation (or even weak quasi-Miura transformation in non-homogeneous case), that is, a change of coordinates on the loop space. Namely, in order to obtain a Poisson bracket, we have to apply a change of coordinates to the fixed local operator $\d_x$. The weak quasi-Miura transformation, of course, depends on the particular cohomological field theory that we started with. 

The key property of that Poisson bracket, conjectured in~\cite{DubZha2} and proved in~\cite{BurPosSha}, is the fact that its expansion in a dispersion parameter $\hbar$ is polynomial in the derivatives of the dependent variables for all semi-simple Frobenius structures. It was proved using the known results for the $n$ copies of KdV hierarchy and an explicit formula for Givental-Lee infinitesimal deformations of the bracket. Let us recall that the Givental group acts transitively on the space of all semi-simple Frobenius structures in a fixed dimension $n$ and it allows to transfer the results established for a particular Frobenius structure to the whole orbit of its action.

The main trouble with the Poisson bracket in~\cite{BurPosSha} is that we were able to work with it only using the property that it is the unique operator that takes a given set of Hamiltonians into the equations of the hierarchy, together with a detailed study of the Hamiltonians and equations.
This very indirect approach gives a remarkably complicated formula for the Givental-Lee deformations of the bracket that, at the first glance, has no nice structure at all and is obtained through a sequence of really complicated calculations that just occassionaly give a concise answer.
Also this approach says nothing about the quasi-Miura transformations, and, therefore, gives no new tools to study the second bracket of Dubrovin and Zhang.

The purpose of this paper is to revisit the deformation formula for the bracket in~\cite{BurPosSha}. That is, we obtain a nice deformation formula for the weak quasi-Miura transformation and use it to give an alternative, more conceptual proof for the deformation formula of the bracket. Of course, the paper is therefore unavoidably very technical.

\subsection{Organization of the paper}
In Section~2 we recall all basic concepts that we need in this paper, that is, we introduce the necessary minimum of variational calculus, main properties of cohomological field theories, the construction of the Dubrovin-Zhang hierarchies, and Givental deformations of CohFTs. In Section 3 we compute the infinitesimal deformations of the quasi-Miura transformations with respect to the Lie algebra of the Givental group. In Section 4 we use the results of Section 3 in order to revisit the deformation formulas for the Poisson bracket of the Dubrovin-Zhang hierarchies that we first obtained in~\cite{BurPosSha}.

\subsection{Acknowledgement} A.~B. and S.~S. were supported by a Vidi grant of the Netherlands Organization fo Scientific Research.
A.~B. was also partially supported by the grants RFBR-10-01-00678, NSh-8462.2010.1, and the Moebius Contest Foundation for Young Scientists.





\section{Basic concepts}
In this section we briefly recall the setup of \cite{BurPosSha}. Throughout, we consider a vector space $V$ of dimension $s$ equipped with an inner product $\left<~,~\right>$. We fix an orthonormal basis $\{e_\alpha\}$, $\alpha=1,\dots,s$, and write $\un$ for the element $\sum_{\alpha=1}^s e_\alpha$.

\subsection{The variational calculus}
Let $B$ be some open ball in $V$.
We consider the formal jet space $\mathcal{J}^\infty(S^1,B)$ of maps from $S^1$ to $B$, and denote a 
typical element of it by $\mathbf{u}=(u_{\alpha,0},u_{\alpha,1},u_{\alpha,2},\ldots )$, where $u_{\alpha,k}:=\partial^ku_\alpha\slash\partial x^k$, $k\geq 1$. The Poisson bracket on this formal jet space is defined
on the space of functionals of the form
\begin{equation}
F(u)=\oint f(x,u_\alpha,u_{\alpha,1},\ldots)dx
\end{equation}
where $x\in S^1$ and 
\begin{equation}
 f(x,u_\alpha,u_{\alpha,1},\ldots)=\sum_{k=0}^\infty \hbar^k f_k(x,u_\alpha,u_{\alpha,1},\ldots)
\end{equation}
is a formal power series where in each degree $k$, the function $f_k$ depends smoothly on $x$ and a finite number $u_{\alpha,i}$. The nice case is when each $f_k$ is a polynomial in the variables $u_{\alpha,i},~i\geq 1$ of degree $2k$, where $\deg(u_{\alpha,i})=i$. 

The total derivative acting on such functionals is defined as
\begin{equation}
\partial_xf:=\frac{\partial f}{\partial x}+\sum_{\alpha,k}\frac{\partial f}{\partial u_{\alpha,k}}u_{\alpha, k+1},
\end{equation}
and $\oint\partial_xfdx=0$. The variational derivative is given by
\begin{equation}
\frac{\delta f}{\delta u_\alpha}:=\sum_{s=0}^\infty (-1)^s\partial^s_x\frac{\partial f}{\partial u_{\alpha,s}}.
\end{equation}

The natural group of coordinate transformations acting on this formal loop space is given by tranformations of the form
\begin{equation}
\label{miura-transf}
u_\alpha\mapsto \tilde{u}_\alpha:=\sum_{k=0}^\infty \hbar^k G_{\alpha,k}(u_\alpha,\ldots,u_{\alpha,n_k}),
\quad \det\left(\frac{\partial G_{\alpha,0}}{\partial u_\beta}\right)\neq 0.
\end{equation}
In case each $G_{\alpha,k}$ is a polynomial of degree $k$ in the variables $u_{\alpha,i},~i\geq 1$ (and therefore depends only on $u_{\alpha},u_{\alpha,1},\dots,u_{\alpha,2k}$), this is called a \emph{Miura transformation}. When $G_{\alpha,k}$ is merely a rational function of degree $2k$ (and therefore may depend on higher derivatives than $u_{\alpha,2k}$), it is called a \emph{quasi-Miura transformation}. We shall refer to the general case, where each $G_k$ is a function of a finite number of variables $(u_\alpha,\ldots,u_{\alpha,n_k})$, as a \emph{weak quasi-Miura transformation}.

The following Lemma is very useful in many of the calculations later on:
\begin{lemma} \label{lem:useful-lemma} For an arbitrary function $U$ and for any $\alpha=1,\dots,s$, we have:
\begin{equation}
\d_x \circ \sum_{s=0}^\infty \frac{\d U}{\d u_{\alpha,s}} \d_x^s = \sum_{s=0}^\infty \frac{\d \left(\d_xU\right)}{\d u_{\alpha,s}} \d_x^s
\end{equation}
\end{lemma}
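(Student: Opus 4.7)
The plan is to prove the identity by expanding both sides as differential operators acting on a test function and matching coefficients of $\partial_x^s$ for every $s \geq 0$. Both sides are formal operators in $\partial_x$ whose coefficients are functions of $x$ and the jet variables, so it suffices to check the coefficient equality term by term.

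First I would expand the left-hand side using the Leibniz rule:
\begin{equation*}
\partial_x \circ \sum_{s=0}^\infty \frac{\partial U}{\partial u_{\alpha,s}} \partial_x^s = \sum_{s=0}^\infty \partial_x\!\left(\frac{\partial U}{\partial u_{\alpha,s}}\right) \partial_x^s + \sum_{s=0}^\infty \frac{\partial U}{\partial u_{\alpha,s}} \partial_x^{s+1}.
\end{equation*}
After reindexing $s \mapsto s-1$ in the second sum, this becomes
\begin{equation*}
\sum_{s=0}^\infty \partial_x\!\left(\frac{\partial U}{\partial u_{\alpha,s}}\right) \partial_x^s + \sum_{s=1}^\infty \frac{\partial U}{\partial u_{\alpha,s-1}} \partial_x^{s}.
\end{equation*}

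Next I would analyse the right-hand side by differentiating the explicit formula for $\partial_x U = \partial U/\partial x + \sum_{\beta,k} (\partial U/\partial u_{\beta,k}) u_{\beta,k+1}$ with respect to $u_{\alpha,s}$. The chain rule yields three contributions: the term $\partial(\partial U/\partial x)/\partial u_{\alpha,s}$ together with $\sum_{\beta,k} (\partial^2 U/\partial u_{\alpha,s}\partial u_{\beta,k}) u_{\beta,k+1}$ reassemble exactly into $\partial_x(\partial U/\partial u_{\alpha,s})$; the remaining piece comes from differentiating the explicit factor $u_{\beta,k+1}$, which contributes $\partial U/\partial u_{\alpha,s-1}$ when $s \geq 1$ and nothing when $s = 0$. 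Thus
\begin{equation*}
\frac{\partial (\partial_x U)}{\partial u_{\alpha,s}} = \partial_x\!\left(\frac{\partial U}{\partial u_{\alpha,s}}\right) + \begin{cases} \partial U/\partial u_{\alpha,s-1}, & s \geq 1, \\ 0, & s=0. \end{cases}
\end{equation*}
Substituting this into the right-hand side gives precisely the two sums obtained for the left-hand side, completing the proof.

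The only subtle point is to keep $x$ and the jet variables $u_{\alpha,s}$ formally independent so that $\partial/\partial x$ and $\partial/\partial u_{\alpha,s}$ commute, and to be careful about the case $s=0$ in the index shift — this is where the contribution of $u_{\beta,k+1}$ to the second argument of the chain rule has no counterpart. The rest is routine bookkeeping, and I do not anticipate any real obstacle.
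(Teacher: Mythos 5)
Your proof is correct and follows essentially the same route as the paper: both expand $\partial_x\circ\sum_s(\partial U/\partial u_{\alpha,s})\partial_x^s$ by the Leibniz rule and then use the commutation relation $\partial(\partial_x U)/\partial u_{\alpha,s}=\partial_x(\partial U/\partial u_{\alpha,s})+\partial U/\partial u_{\alpha,s-1}$ to see the telescoping cancellation. The only difference is that you derive this commutator explicitly from the definition of $\partial_x$, whereas the paper simply quotes it; your treatment of the $s=0$ edge case is also correct.
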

\begin{proof} A straightforward calculation (we use only that $\d_x \circ \d/\d u_{\alpha,s} = \d/\d u_{\alpha,s} \circ \d_x -\d/\d u_{\alpha,s-1}$):
\begin{align}
&\d_x \circ \sum_{s=0}^\infty \frac{\d U}{\d u_{\alpha,s}} \d_x^s 
 = \sum_{s=0}^\infty \d_x\frac{\d U}{\d u_{\alpha,s}} \d_x^s + \sum_{s=0}^\infty \frac{\d U}{\d u_{\alpha,s}} \d_x^{s+1} \\ \notag
& = \sum_{s=0}^\infty \frac{\d \left(\d_xU\right)}{\d u_{\alpha,s}} \d_x^s -  \sum_{s=1}^\infty \frac{\d U}{\d u_{\alpha,s-1}} \d_x^{s}
+ \sum_{s=0}^\infty \frac{\d U}{\d u_{\alpha,s}} \d_x^{s+1} \\ \notag
& = \sum_{s=0}^\infty \frac{\d \left(\d_xU\right)}{\d u_{\alpha,s}} \d_x^s.
\end{align}
\end{proof}

For any weak quasi-miura transformation as in Equation~\eqref{miura-transf}, we introduce the differential operator 
\begin{equation}
L^\alpha_{\beta}=\sum_{s=0}^\infty \frac{\partial \tilde{u}_{\alpha}}{\partial u_{\beta,s}}\partial_x^s
\end{equation}
and its formal adjoint
\begin{equation}
\left(L^*\right)^\alpha_{\beta}=\sum_{s=0}^\infty (-\d_x)^s \circ \frac{\partial \tilde{u}_{\alpha}}{\partial u_{\beta,s}}
\end{equation}

Using the formal adjoint of the equality of Lemma \ref{lem:useful-lemma}, one easily shows that
\begin{equation}
\label{eq:L-var-der}
\sum_{\alpha}\left(L^*\right)^\beta_\alpha\circ\frac{\delta}{\delta \tilde{u}_\beta}=\frac{\delta}{\delta u_{\alpha}}.
\end{equation}

\subsection{Cohomological field theories}
We refer to \cite{KonMan,Sha,Tel} for an introduction to cohomological field theories. Here we are only interested in the associated partition function which is a formal power series
\begin{equation}
Z(t_0,t_1,\dots)=\exp\left(\sum_{g=0}^\infty\hbar^{g-1}F_g(t_0,t_1,\dots)\right),
\end{equation}
where $t_k=\{t_{1,k},\dots,t_{s,k}\}$, $k=0,1,2,\dots$, are the components of a vector $t=\sum_{\alpha,k}t_{\alpha,k}e_\alpha z^k$ of $V\otimes\C[[z]]$. We shall always assume that $\hbar\log Z$ is analytic in the variables $t_0$ and a formal power series in $\hbar$ and $t_{\alpha,k}$, $k\geq 1$.

The structure of a CohFT implies that $Z$ satsisfies the following two properties (see, e.~g.,~\cite{EguXio,KonMan2}):
\begin{itemize}
\item $Z$ is \emph{tame} in the sense that
\begin{align}\label{eq:TRR}
O_{\alpha,k}\left(\frac{\partial^2F_0}{\partial t_{\beta,p}\partial t_{\gamma, q}}\right)&=0, & & k>0,\ \alpha=1,\dots, s,\\ \notag
O_{\alpha,k}(F_g)&=0,& & g\geq 1,~ k>3g-2,~\alpha=1,\dots,s,
\end{align}
where the formal vector fields $O_{\alpha,k}$ are recursively defined in terms of $F_0$ by
\begin{align}
O_{\alpha,0}&:=\frac{\partial}{\partial t_{\alpha,0}}\\ \notag
O_{\alpha,k}&:=\frac{\partial}{\partial t_{\alpha,k}}-\sum_{i=0}^{k-1}\sum_\beta\frac{\partial^2F_0}{\partial t_{\alpha,i}\partial t_{\beta,0}}O_{\beta,k-i-1}.
\end{align}
Equations~\eqref{eq:TRR} are called \emph{the topological recursion relations}.
\item $Z$ satisfies the \emph{string equation}, that is, for each $g\geq 0$
\begin{equation}
\frac{\partial F_g}{\partial t_{\un,0}}=\sum_{\nu,k}t_{\nu,k+1}\frac{\partial F_g}{\partial t_{\nu,k}} +\frac{\delta_{g,0}}{2}\sum_{\alpha}t_{\alpha,0}^2.
\end{equation}
\end{itemize}
We do not impose any kind of homogeneity, in which case the cohomological field theory would be called \emph{conformal}. This means that the underlying Frobenius manifold is not required to have an Euler vector field. The reason for this is that homogeneity is not well compatible with the action of the Givental group, cf.~Section~\ref{sec:givental-group} below.

\subsection{The principal and full hierarchies}
Here we describe one of the core constructions of~\cite{DubZha2,BurPosSha}, the construction of two Hamiltonian hierarchies of PDE's associated to a CohFT. First, we introduce the notation 
\begin{equation}
\Omega_{\alpha,p;\beta,q}:=\sum_{g=0}^\infty\hbar^g\frac{\partial^2F_g}{\partial t_{\alpha,p}\partial t_{\beta, q}},
\end{equation}
and write $\Omega^{[0]}_{\alpha,p;\beta,q}$ for the constant term in $\hbar$ in this formal power series. We also 
introduce new coordinates
\begin{equation}
\label{w-coordinates}
w_{\alpha,k}:=\frac{\partial^k\Omega_{\alpha,0;\un,0}}{\partial t_{\un,0}^k}.
\end{equation}
Again, this is a formal power series in $\hbar$ and we write $w_{\alpha,k}=v_{\alpha,k}+O(\hbar)$. 
The $v_{\alpha,k}$ are the coordinates of the principal hierarchy that we now describe.

In genus zero ($g=0$), it follows from the topological recursion relation and the string equation that $\Omega^{[0]}_{\alpha,p;\beta,q}(t_0,t_1,\dots)=\Omega^{[0]}_{\alpha,p;\beta,q}(v,0,\dots)$ depends only on $v_\alpha=v_{\alpha,0}$. With this fact, the principal hierarchy is defined as
\begin{equation}
\frac{\partial v_\alpha}{\partial t_{\beta,q}}=\partial_x\left(\Omega^{[0]}_{\alpha,0;\beta,q}(v,0,0,\dots)\right),\quad \beta=1,\dots, s,~q\geq 0.
\end{equation}
It can be shown, cf.~\cite[Proposition 5]{BurPosSha}, that this is a Hamiltonian system of PDE's with the Hamiltonians given by $h^{[0]}_{\alpha,p}=\Omega^{[0]}_{\un,0;\alpha,p+1}$, and the Poisson bracket
\begin{equation}
\{F,G\}:=\oint\sum_\alpha\frac{\delta f}{\delta v_\alpha}\partial_x\frac{\delta g}{\delta v_\alpha}dx.
\end{equation}

For the full hierarchy, we have to use the coordinates $w_{\alpha,k}$. It follows from the higher genus topological recursion relations that the coordinate transformation
\begin{equation}
v_\alpha\mapsto w_\alpha=v_\alpha+\sum_{g=1}^\infty\hbar^g\frac{\partial^2 F_g}{\partial t_{\un,0}\partial t_{\alpha,0}},
\end{equation}
is a weak quasi-Miura transformation. That is, in each degree $k$ of $\hbar$, the $\Omega_{\alpha,p;\beta,q}$ depends only on a finite number of variables $\mathbf{w}=(w,w_1,\dots,$ $w_{3k})$. With this, the hierarchy associated to a CohFT is defined as
\begin{equation}
\frac{\partial w_\alpha}{\partial t_{\beta,q}}=\partial_x\left(\Omega_{\alpha,0;\beta,q}(w,w_1,w_2,\dots)\right)
\end{equation}
This system of PDE's comes with a tautological solution called the \emph{topological solution} and given by
$w_\alpha(x,t):=w_\alpha(t_0+x,t_1,\dots)$ where we shift along $t_{\un,0}$ with $x$, so that $w_{\alpha,k}=\partial_x^kw_\alpha$, consistent with \eqref{w-coordinates}.

To discuss the Hamiltonian structure of this hierarchy, we use the coordinate change $v_\alpha\mapsto w_\alpha$, which we implement using the operator
\begin{equation}
\label{def-L}
L^\alpha_{\beta}:=\sum_{s=0}^\infty\frac{\partial w_\alpha}{\partial v_{\beta,s}}\partial_x^s.
\end{equation}
With this operator, we define the Poisson bracket of the full hierarchy as
\begin{equation}
\{F,G\}:=\oint\sum_{\alpha,\beta,\gamma}\frac{\delta f}{\delta w_\alpha}L^\alpha_{\gamma}\partial_x\left(L^*\right)^\beta_{\gamma}\frac{\delta g}{\delta w_\beta}dx.
\end{equation}
The Hamiltonians are defined by $h_{\alpha,p}:=\Omega_{\alpha,p+1;\un,0}$. Since $\partial\slash\partial t_{\un,0}=\partial_x$, these are just the Hamiltonians of the principal hierarchy, deformed by an $\partial_x$-exact term. With this, and the equality \eqref{eq:L-var-der}, we compute:
\begin{align}
\{H_{\alpha,p},H_{\beta,q}\}&=\oint \sum_{\gamma,\delta,\nu}\frac{\delta h_{\alpha,p}}{\delta w_\gamma}L^\gamma_{\nu}\partial_x\left(L^*\right)^\delta_{\nu}\frac{\delta h_{\beta,q}}{\delta w_\delta}dx\\ \notag
&=\oint \sum_{\gamma,\delta,\nu}\left(L^*\right)^\gamma_{\nu}\frac{\delta h^{[0]}_{\alpha,p}}{\delta w_\gamma}
\partial_x\left(L^*\right)^\delta_{\nu}\frac{\delta h^{[0]}_{\beta,q}}{\delta w_\delta}dx\\ \notag
&=\oint\sum_\gamma\frac{\delta h^{[0]}_{\alpha,p}}{\delta v_\gamma}\partial_x \frac{\delta h^{[0]}_{\beta,q}}{\delta v_\gamma} dx=0,
\end{align}
where in the last line we have used the fact that the $h^{[0]}_{\alpha,p}$ are the Hamiltonians of the principal hierarchy.

Useful expositions of these constructions are also available in~\cite{Dub2,Ros}.

\subsection{The Givental group}
\label{sec:givental-group}
The action of the Givental group on cohomological field theories is best described in terms of its Lie algebra. This Lie algebra splits into two parts, the upper (resp., lower) triangular part $\mathfrak{g}_+$ (resp., $\mathfrak{g}_-$), defined as
\begin{equation}
\mathfrak{g}_\pm:=\left\{ u(z):=\sum_{k>0}u_kz^{\pm k},~u_k\in{\rm End}(V),~u(-z)^t+u(z)=0\right\}.
\end{equation}
Explicit formulas for the action can conveniently be given by defining $q_{\alpha,p}:=t_{\alpha,p}-\delta_{p,1}$, introducing $q_{\alpha,p}$ for $p<0$, and redefining
\begin{equation}
\label{eq:F-0}
\tilde{F}_0:=F_0+\sum_{i\geq 0}(-1)^i q_{\mu,i}q_{\mu,-1-i}.
\end{equation}
With this alteration of $F_0$, the same formula leads to a new partition function $\tilde{Z}$ and $\tilde F= \hbar \log \tilde Z$.

The action of the upper triangular part on this partition function is given by associating to $r_\ell z^\ell$, $\ell\geq 1$, the formal differential operator
\begin{align}\label{eq:r-hat}
\widehat{r_\ell}:= \sum_{\alpha,\beta} (r_\ell)_{\alpha,\beta} & \left( 
\frac{\hbar}{2}
\sum_{\begin{smallmatrix}
i+j=\ell-1\\ -\infty<i,j<\infty
\end{smallmatrix}}\frac{(-1)^{i+1}\partial^2}{\partial q_{\alpha,i}\partial q_{\beta,j}}
+\sum_{i\geq 0}\frac{q_{\alpha,-1-i}\partial}{\partial q_{\beta,\ell-1-i}} \right. \\ \notag &
+\frac{1}{2\hbar} 
\left. \sum_{\begin{smallmatrix}
i+j=\ell-1\\ i,j\geq 0
\end{smallmatrix}}
(-1)^{j+1} q_{\alpha,-1-i}q_{\beta,-1-j}
\right).
\end{align}
To see that this operator coincides with the one given in \cite{Lee1} is a small computation where one uses that the factor by which $F_0$ is shifted in \eqref{eq:F-0} allows to turn the operator of multiplication by $q_{\alpha,i}$, $i\geq 0$, into the derivation $(-1)^i\d/\d q_{\alpha,-1-i}$.

The action of the lower triangular part is given by the formal vector field
\begin{align}\label{eq:s-hat}
\widehat{s_\ell}:= \sum_{\alpha,\beta} (s_\ell)_{\alpha,\beta} \left( 
\frac{\hbar}{2}
\sum_{\begin{smallmatrix}
i+j=-\ell-1\\ -\infty<i,j<\infty
\end{smallmatrix}}\frac{(-1)^{i+1}\partial^2}{\partial q_{\alpha,i}\partial q_{\beta,j}}
+\sum_{i\geq 0}\frac{q_{\alpha,-1-i}\partial}{\partial q_{\beta,-\ell-1-i}} \right).
\end{align}
associated to the element $s_\ell z^{-\ell}$, $\ell\geq 1$.

We rewrite the operators $\widehat{r_\ell}$ and $\widehat{s_\ell}$ in this, a bit unusual way, in order to shorten the computations below (cf.~\cite[Remarks 8 and 10]{BurPosSha}). In fact, all but the first summands in Equations~\eqref{eq:r-hat} and~\eqref{eq:s-hat} obviously do not contibute to the deformation formulas below; formally it can be derived, for instance, as a corollary of Lemma~\ref{lem:U-q} in Section 3.

We refer to~\cite{Giv2,Giv3,Lee1,Sha} for different expositions and further details of the Givental theory.  





\section{Deformation of quasi-Miura transformation}\label{sec:r-def-L}

\subsection{The $r$-deformation formula}
In this section, we prove a formula for the $r$-deformation of the weak quasi-Miura transformation given by the operator
$L_{\beta}^{\alpha}:=\sum_{s=0}^\infty L^{\alpha}_{\beta,s}\d_x^s$, where $L_{\beta,s}^{\alpha}=\d w_{\alpha}/\d v_{\beta,s}$.

By $r[w].$, $r[v].$, $r[t].$ we denote the $r_\ell$-deformations in the corresponding coordinates. We use the same notation also for $s_\ell$-deformations

\begin{theorem} \label{thm:def-quasi-miura} We have:
\begin{align}\label{eq:def-quasi-miura}
r[w].\sum_{s=0}^\infty L^\alpha_{\beta,s}\d_x^s & =  
\sum_{n=0}^\infty \frac{\d(r[t].w_{\alpha})}{\d w_{\gamma,n}}\d_x^n
\circ
\sum_{s=0}^\infty L^\gamma_{\beta,s}\d_x^s \\ \notag
& -
\sum_{s=0}^\infty L^\alpha_{\gamma,s}\d_x^s
\circ
\sum_{n=0}^\infty \frac{\d(r[t].v_{\gamma})}{\d v_{\beta,n}}\d_x^n \\ \notag
& - \sum_{m,s=0}^\infty r[t].w_{\delta,m}\frac{\d L^\alpha_{\beta,s}}{\d w_{\delta,m}}\d_x^s.
\end{align}
\end{theorem}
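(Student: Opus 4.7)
My approach is to derive the formula by applying the $r$-deformation directly to the defining identity $L^\alpha_{\beta,s} = \partial w_\alpha/\partial v_{\beta,s}$ and carefully tracking chain-rule contributions. The plan is to first compute the variation of $L^\alpha_\beta$ in $v$-coordinates and then transfer to $w$-coordinates via a chain-rule transition identity.

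The starting point is the decomposition of the total $t$-variation of $w_\alpha$ into an intrinsic form-variation plus a convective term,
\[
r[v].w_\alpha = r[t].w_\alpha - L^\alpha_\gamma\cdot r[t].v_\gamma,
\]
which is just the chain rule, using $L^\alpha_{\gamma,k} = \partial w_\alpha/\partial v_{\gamma,k}$ and $r[t].v_{\gamma,k} = \partial_x^k r[t].v_\gamma$. Since $r[v]$ commutes with $\partial/\partial v_{\beta,s}$ by construction, this gives
\[
r[v].L^\alpha_\beta = \sum_s\frac{\partial(r[t].w_\alpha)}{\partial v_{\beta,s}}\partial_x^s - \sum_s\frac{\partial(L^\alpha_\gamma\, r[t].v_\gamma)}{\partial v_{\beta,s}}\partial_x^s.
\]
For the first sum I apply the chain rule $\partial/\partial v_{\beta,s} = \sum_{\gamma,n}(\partial w_{\gamma,n}/\partial v_{\beta,s})\,\partial/\partial w_{\gamma,n}$ together with the iterated form $\partial_x^n\circ L^\gamma_\beta = \sum_s(\partial w_{\gamma,n}/\partial v_{\beta,s})\partial_x^s$ of Lemma~\ref{lem:useful-lemma}; this reproduces precisely the first summand of the theorem. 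For the second sum, the product rule combined with the same iterated Lemma produces the composition $L^\alpha_\gamma\circ\sum_n(\partial(r[t].v_\gamma)/\partial v_{\beta,n})\partial_x^n$ (matching the second summand) plus a residual operator $R := \sum_{s,k}(\partial L^\alpha_{\gamma,k}/\partial v_{\beta,s})\partial_x^k(r[t].v_\gamma)\partial_x^s$ coming from the derivative landing on the coefficients of $L$.

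The final step is to pass from $r[v].L$ to $r[w].L$ via the transition identity
\[
r[w].L^\alpha_{\beta,s} - r[v].L^\alpha_{\beta,s} = \sum_{\gamma,k}\frac{\partial L^\alpha_{\beta,s}}{\partial v_{\gamma,k}}r[t].v_{\gamma,k} - \sum_{\delta,m}\frac{\partial L^\alpha_{\beta,s}}{\partial w_{\delta,m}}r[t].w_{\delta,m},
\]
obtained by equating the two expressions of $r[t].L^\alpha_{\beta,s}$ (as a function of $v$-jets and of $w$-jets respectively). The residual $R$ from the previous step is cancelled exactly by the $v$-jet derivative term above, via the equality of mixed partials
\[
\frac{\partial L^\alpha_{\gamma,k}}{\partial v_{\beta,s}} = \frac{\partial^2 w_\alpha}{\partial v_{\beta,s}\,\partial v_{\gamma,k}} = \frac{\partial L^\alpha_{\beta,s}}{\partial v_{\gamma,k}}
\]
together with $r[t].v_{\gamma,k} = \partial_x^k r[t].v_\gamma$, leaving only the $w$-jet derivative term that matches the last summand of the theorem. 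The main technical difficulty will be keeping the index bookkeeping straight and distinguishing operators from functions in the product-rule expansion of $\partial(L^\alpha_\gamma\, r[t].v_\gamma)/\partial v_{\beta,s}$; once that is handled, the final cancellation via the equality of mixed partials is the only non-trivial input.
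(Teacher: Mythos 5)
Your proof is correct and follows essentially the same route as the paper's: the same chain rules relating $r[t].$, $r[v].$ and $r[w].$, the commutativity of $r[v].$ with $\d/\d v_{\beta,s}$, the cancellation via equality of mixed partials $\d^2 w_\alpha/\d v_{\beta,s}\d v_{\gamma,k}$, and Lemma~\ref{lem:useful-lemma} to pass from coefficient identities to operator compositions. The only difference is bookkeeping order (you compute $r[v].L$ in full before transferring to $r[w].L$ via your transition identity, whereas the paper substitutes in the opposite direction), which is immaterial.
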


\begin{proof} The proof is a straightforward computation. Indeed, the chain rule implies that
\begin{equation}\label{eq:chain-w-v}
r[w].L^{\alpha}_{\beta,s} =r[v].L^{\alpha}_{\beta,s}-\sum_{\delta,m}r[v].w_{\delta,m}\frac{\d L^{\alpha}_{\beta,s}}{\d w_{\delta,m}}.
\end{equation}
Using that $r[v].$ commutes with $\d/\d v_{\beta,s}$ and the chain rule 
\begin{equation}
r[v].w_{\delta,m}=r[t].w_{\delta,m}-\sum_{\nu,l}r[t].v_{\nu,l}\frac{\d w_{\delta,m}}{\d v_{\nu,l}}
\end{equation}
for any values of $(\delta,m)$, including $(\alpha,0)$, we see that the right hand side of Equation~\eqref{eq:chain-w-v} is equal to
\begin{align}
&\frac{\d}{\d v_{\beta,s}}\left(r[t].w_{\alpha}-\sum_{\gamma,n}r[t].v_{\gamma,n}\frac{\d w_{\alpha}}{\d v_{\gamma,n}}\right) \\ \notag
&-\sum_{\delta,m}\left(r[t].w_{\delta,m}-\sum_{\nu,l}r[t].v_{\nu,l}\frac{\d w_{\delta,m}}{\d v_{\nu,l}}\right)
\frac{\d}{\d w_{\delta,m}}\left(\frac{\d w_{\alpha}}{\d v_{\beta,s}}\right)
\end{align}

Note that 
\begin{equation}
\sum_{\delta,m}\frac{\d w_{\delta,m}}{\d v_{\nu,l}}\frac{\d}{\d w_{\delta,m}}\frac{\d w_{\alpha}}{\d v_{\beta,s}}=\frac{\d^2 w_{\alpha}}{\d v_{\beta,s}\d v_{\nu,l}}.
\end{equation} 
Therefore, we have
\begin{equation}\label{equation:deformation formula1}
r[w].L_{\beta,s}^{\alpha}=\frac{\d(r[t].w_{\alpha})}{\d v_{\beta,s}}-\sum_{\gamma,n}\frac{\d(r[t].v_{\gamma,n})}{\d v_{\beta,s}}\frac{\d w_{\alpha}}{\d v_{\gamma,n}}-\sum_{\delta,m}r[t].w_{\delta,m}\frac{\d}{\d w_{\delta,m}}\frac{\d w_{\alpha}}{\d v_{\beta,s}}
\end{equation}
Lemma~\ref{lem:useful-lemma} implies that this formula is equivalent to the Equation~\eqref{eq:def-quasi-miura}.
\end{proof}

\subsection{Finite dependence on variables $\mathbf{v}$ and $\mathbf{w}$ }

There is a problem with the deformation formula given by Equation~\eqref{eq:def-quasi-miura}. It is not obvious that under this deformation the operator is still well-defined on the class of functions depending on a finite number of variables $\mathbf{v}$ and $\mathbf{w}$ in every term of $h$-expansion and it is not obvious that under this deformation the coefficients of the operator still themselves depend on a finite number of variables $\mathbf{v}$ and $\mathbf{w}$. 

Indeed, 
\begin{align}
r[t].v_{\alpha}& =  \frac{1}{2}\sum_{\ell=1}^\infty\sum_{\begin{smallmatrix}i+j=l-1\\ \mu,\nu \end{smallmatrix}}(-1)^{i+1}\left(r_\ell\right)_{\mu\nu}
\d_x\frac{\d}{\d q_{\alpha,0}}\left(\frac{\d \tilde F_0}{\d q_{\mu,i}}\frac{\d \tilde F_0}{\d q_{\nu,j}}\right) \\ \notag 
& = \sum_{\ell=1}^\infty\sum_{\begin{smallmatrix}i+j=l-1\\ \mu,\nu \end{smallmatrix}}(-1)^{i+1}\left(r_\ell\right)_{\mu\nu}
\frac{\d \tilde F_0}{\d q_{\mu,i}} \frac{\d v_\alpha}{\d q_{\nu,j}} + \mbox{ good terms },
\end{align}
where by ``good'' terms we mean the summand that depends on a finite number of variables $v_{\beta,q}$. We recall that the function $\d \tilde F_0/\d q_{\mu,i}$ doesn't behave nicely in coordinates $\mathbf{v}$. In the same way, we have:
\begin{align}
r[t].w_{\alpha} = \sum_{\ell=1}^\infty\sum_{\begin{smallmatrix}i+j=l-1\\ \mu,\nu \end{smallmatrix}}(-1)^{i+1}\left(r_\ell\right)_{\mu\nu}
\frac{\d \tilde F_0}{\d q_{\mu,i}} \frac{\d w_\alpha}{\d q_{\nu,j}} + \mbox{ good terms }.
\end{align}

We can collect here all the ``bad'' terms that potentially cause the problems with finite dependance on $v$ and $w$:
\begin{align} \label{eq:infinite-terms}
& \sum_{\gamma,n,s} \frac{\d\left(\frac{\d \tilde F_0}{\d q_{\mu,i}}\frac{\d w_{\alpha}}{\d q_{\nu,j}}\right)}{\d w_{\gamma,n}}\d_x^n
\circ
 L^\gamma_{\beta,s}\d_x^s 
 -
\sum_{\gamma,n,s} L^\alpha_{\gamma,s}\d_x^s
\circ
\frac{\d\left(\frac{\d \tilde F_0}{\d q_{\mu,i}}\frac{\d v_{\gamma}}{\d q_{\nu,j}}\right)}{\d v_{\beta,n}}\d_x^n \\ \notag
& - \sum_{\delta,m,s} \frac{\d \tilde F_0}{\d q_{\mu,i}}\frac{\d w_{\delta,m}}{\d q_{\nu,j}}\frac{\d L^\alpha_{\beta,s}}{\d w_{\delta,m}}\d_x^s.
\end{align}

Before we'll take care of these ``bad'' terms, let us prove a more general statement about this kind of formulas. 
\begin{lemma} \label{lem:U-q} For any function $U$ and for any variable $q=q_{\nu,j}$, $\nu$ and $j$ are some arbitrary indices, we have:
\begin{align}\label{eq:U-q}
& \sum_{\gamma,n,s}  \frac{\d\left(U\frac{\d w_{\alpha}}{\d q}\right)}{\d w_{\gamma,n}}\d_x^n
\circ L^\gamma_{\beta,s}\d_x^s 
 -
\sum_{\gamma,n,s}  L^\alpha_{\gamma,s}\d_x^s
\circ
\frac{\d\left(U\frac{\d v_{\gamma}}{\d q}\right)}{\d v_{\beta,n}}\d_x^n \\ \notag
& - \sum_{\delta,m,s} U\frac{\d w_{\delta,m}}{\d q}\frac{\d L^\alpha_{\beta,s}}{\d w_{\delta,m}}\d_x^s
=-\sum_{\nu,r,s} \frac{\d w_\alpha}{\d v_{\nu,r}} 
\frac{\d \left(\sum_{i=1}^{r} \binom{r}{i} \d_x^iU \frac{\d v_{\nu,r-i}}{\d q}\right)}{\d v_{\beta,s}} \d_x^s.
\end{align}
\end{lemma}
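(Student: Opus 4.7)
The plan is to rewrite the three summands of the left-hand side purely in $v$-coordinates via Lemma~\ref{lem:useful-lemma} and the chain rule, and to identify them as arising from Leibniz expansions of a single compact expression. I abbreviate $A_\alpha := \d w_\alpha/\d q$, $B_\mu := \d v_\mu/\d q$, $C_{\delta,m} := \d w_{\delta,m}/\d q$, so that the three summands involve $U A_\alpha$, $U B_\gamma$, $U C_{\delta,m}$, respectively.

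First I would simplify the first summand. By induction on $n$, Lemma~\ref{lem:useful-lemma} yields $\d_x^n \circ L^\gamma_\beta = \sum_s (\d w_{\gamma,n}/\d v_{\beta,s})\,\d_x^s$. Combined with the chain rule $\d F/\d v_{\beta,s} = \sum_{\gamma,n} (\d F/\d w_{\gamma,n})(\d w_{\gamma,n}/\d v_{\beta,s})$ applied to $F = U A_\alpha$, this collapses the first summand to $\sum_s (\d(U A_\alpha)/\d v_{\beta,s})\,\d_x^s$. Next, the chain rule $A_\alpha = \sum_{\mu,r} (\d w_\alpha/\d v_{\mu,r})\,\d_x^r B_\mu$ together with the Leibniz identity $U\d_x^r B_\mu = \d_x^r(UB_\mu) - \sum_{i=1}^r \binom{r}{i}\d_x^i U \cdot \d_x^{r-i} B_\mu$ splits $U A_\alpha = S_1 - S_2$ with $S_1 := \sum_{\mu,r}(\d w_\alpha/\d v_{\mu,r})\d_x^r(UB_\mu)$.

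Applying $\d/\d v_{\beta,s}$ and Leibniz to each of $S_1$ and $S_2$ produces four pieces. The piece of $-\d S_2/\d v_{\beta,s}$ in which $\d/\d v_{\beta,s}$ falls on $\sum_{i=1}^r \binom{r}{i}\d_x^i U \d_x^{r-i}B_\mu$ is exactly the right-hand side of~\eqref{eq:U-q}. The piece $P := \sum_{\mu,r,s}(\d w_\alpha/\d v_{\mu,r})(\d \d_x^r(UB_\mu)/\d v_{\beta,s})\d_x^s$ coming from $S_1$ matches the second summand of the LHS: reversing iterated Lemma~\ref{lem:useful-lemma} gives $\sum_s (\d(\d_x^r F)/\d v_{\beta,s})\d_x^s = \d_x^r \circ \sum_s (\d F/\d v_{\beta,s})\d_x^s$, whence $P = \sum_\mu L^\alpha_\mu \circ \sum_s (\d(UB_\mu)/\d v_{\beta,s})\d_x^s$, which is precisely the second summand.

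Finally, the two remaining pieces — in which $\d/\d v_{\beta,s}$ acts on $\d w_\alpha/\d v_{\mu,r}$ — combine via $\d_x^r(UB_\mu) - \sum_{i=1}^r\binom{r}{i}\d_x^i U \d_x^{r-i}B_\mu = U\d_x^r B_\mu$ into $U\sum_{\mu,r,s}(\d^2 w_\alpha/\d v_{\beta,s}\d v_{\mu,r})\,\d_x^r B_\mu \cdot \d_x^s$. Commutativity of mixed partials identifies $\d^2 w_\alpha/\d v_{\beta,s}\d v_{\mu,r} = \d L^\alpha_{\beta,s}/\d v_{\mu,r}$, and the chain rule $\d F/\d q = \sum_{\mu,r}(\d F/\d v_{\mu,r})\d_x^r B_\mu = \sum_{\delta,m} C_{\delta,m}(\d F/\d w_{\delta,m})$ applied to $F = L^\alpha_{\beta,s}$ (viewed as an ordinary function expressible in either coordinate system) identifies the combined expression with the third summand of the LHS. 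The main obstacle is nothing more than careful Leibniz bookkeeping to make sure these four pieces reassemble correctly; the two conceptually load-bearing moves are the reverse application of Lemma~\ref{lem:useful-lemma} yielding $P = $ (second summand), and the chain-rule identity viewed in both coordinate systems yielding the third summand.
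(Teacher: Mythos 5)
Your proof is correct and is essentially the paper's own argument read in the opposite direction: the paper pivots on the third summand, first collapsing $\sum_{\delta,m}U\frac{\d w_{\delta,m}}{\d q}\frac{\d}{\d w_{\delta,m}}$ to $U\frac{\d}{\d q}$ and then redistributing the result over the other two summands via the chain rule and Lemma~\ref{lem:useful-lemma}, whereas you pivot on the first summand, reduce it to $\sum_s\frac{\d(U\,\d w_\alpha/\d q)}{\d v_{\beta,s}}\d_x^s$, and expand $\d w_\alpha/\d q$ through the $v$-coordinates. The four pieces you obtain are in exact bijection with the terms of the paper's computation -- including the binomial Leibniz correction $\d_x^r(UB_\mu)-U\d_x^rB_\mu$ that produces the right-hand side -- so there is no gap.
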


\begin{proof} The proof is a straignforward computation. Indeed,
\begin{align}
& \sum_{\delta,m,s} U\frac{\d w_{\delta,m}}{\d q}\frac{\d L^\alpha_{\beta,s}}{\d w_{\delta,m}}\d_x^s
= \sum_{s} U \frac{\d L^\alpha_{\beta,s}}{\d q}\d_x^s
= \sum_{\delta,m,s} U\frac{\d }{\d q}\left(\frac{\d w_\alpha}{\d q_{\delta,m}}\frac{\d q_{\delta,m}}{\d v_{\beta,s}}\right)\d_x^s \\ \notag
& =
\sum_{\delta,m,s} \frac{\d \left(U\frac{\d w_\alpha}{\d q}\right)}{\d q_{\delta,m}}\frac{\d q_{\delta,m}}{\d v_{\beta,s}}\d_x^s
-
\sum_{\delta,m,s} \frac{\d w_\alpha}{\d q} \frac{\d U}{\d q_{\delta,m}}\frac{\d q_{\delta,m}}{\d v_{\beta,s}}\d_x^s \\ \notag
& - \sum_{\delta,m,s} \frac{\d w_\alpha}{\d q_{\delta,m}} \sum_{\nu,r,\lambda,p} \frac{\d q_{\delta,m}}{\d v_{\nu,r}} U\frac{\d}{\d q}\left(
\frac{\d v_{\nu,r}}{\d q_{\lambda,p}} \right) \frac{\d q_{\lambda,p}}{\d v_{\beta,s}} \d_x^s \\ \notag
& = \sum_{s} \frac{\d \left(U\frac{\d w_\alpha}{\d q}\right)}{\d v_{\beta,s}}\d_x^s
 - \sum_{\nu,r,s} \frac{\d w_\alpha}{\d v_{\nu,r}} 
\frac{\d \left(U\frac{\d v_{\nu,r}}{\d q}\right)}{\d v_{\beta,s}} \d_x^s.
\end{align}
Using Lemma~\ref{lem:useful-lemma}, we see that the first summand here is equal to the first summand on the left hand side of Equation~\eqref{eq:U-q}. Meanwhile, using Lemma~\ref{lem:useful-lemma} once again, we see that 
\begin{align}
& - \sum_{\nu,r,s} \frac{\d w_\alpha}{\d v_{\nu,r}} 
\frac{\d \left(U\frac{\d v_{\nu,r}}{\d q}\right)}{\d v_{\beta,s}} \d_x^s
=-\sum_{\gamma,n,s}  L^\alpha_{\gamma,s}\d_x^s
\circ
\frac{\d\left(U\frac{\d v_{\gamma}}{\d q}\right)}{\d v_{\beta,n}}\d_x^n \\ \notag
&+
\sum_{\nu,r,s} \frac{\d w_\alpha}{\d v_{\nu,r}} 
\frac{\d \left(\sum_{i=1}^{r} \binom{r}{i} \d_x^iU \frac{\d v_{\nu,r-i}}{\d q}\right)}{\d v_{\beta,s}} \d_x^s,
\end{align}
which is exactly the second summand on the left hand side of Equation~\eqref{eq:U-q} and the right hand side of that equation.
\end{proof}

\begin{corollary}\label{cor:finite} Formula~\eqref{eq:infinite-terms} is equal to 
\begin{equation}
-\sum_{\lambda,r,s} \frac{\d w_\alpha}{\d v_{\nu,r}} 
\frac{\d \left(\sum_{i=1}^{r} \binom{r}{i} \d_x^{i-1}\Omega^{[0]}_{\mu,i;\un,0} \frac{\d v_{\nu,r-i}}{\d q_{\nu,j}}\right)}{\d v_{\beta,s}} \d_x^s.
\end{equation} 
The coefficients of this operator depend on a finite number of the variables $\mathbf{v}$ and $\mathbf{w}$ in every term of the $\hbar$-expansion.
\end{corollary}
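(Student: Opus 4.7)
The corollary should follow by a direct application of Lemma~\ref{lem:U-q}, followed by a translation from $\tilde F_0$ derivatives to $\Omega^{[0]}$ data on the topological solution, and a routine genus-counting check for the finiteness statement.

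First, I would set $U := \d\tilde F_0/\d q_{\mu,i}$ and $q := q_{\nu,j}$ in Lemma~\ref{lem:U-q}. Comparing the three summands of Equation~\eqref{eq:infinite-terms} with the three summands on the left-hand side of Equation~\eqref{eq:U-q}, one sees they match term-by-term (the $(r_\ell)_{\mu\nu}$ prefactors and signs coming from~\eqref{eq:r-hat} can be pulled out in front of each fixed choice of $(\ell,i,j,\mu,\nu)$ and reattached at the end, since the whole construction is linear in them). The lemma therefore rewrites~\eqref{eq:infinite-terms} as
\begin{equation*}
-\sum_{\nu,r,s} \frac{\d w_\alpha}{\d v_{\nu,r}}
\frac{\d \left(\sum_{i=1}^{r} \binom{r}{i} \d_x^iU \, \frac{\d v_{\nu,r-i}}{\d q_{\nu,j}}\right)}{\d v_{\beta,s}} \d_x^s.
\end{equation*}

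Second, I would rewrite $\d_x^i U$ in terms of $\Omega^{[0]}$. On the topological solution, $\d_x$ acts as $\d/\d t_{\un,0} = \d/\d q_{\un,0}$. The quadratic correction $\sum_i (-1)^i q_{\mu,i}q_{\mu,-1-i}$ distinguishing $\tilde F_0$ from $F_0$ in~\eqref{eq:F-0} is independent of $q_{\un,0}$ once one differentiates with respect to any $q_{\mu,i}$ with $i\geq 0$, which is the range appearing in the $r$-action~\eqref{eq:r-hat}. Hence $\d_x U = \d^2 F_0/(\d t_{\un,0}\d t_{\mu,i}) = \Omega^{[0]}_{\mu,i;\un,0}$, so $\d_x^i U = \d_x^{i-1}\Omega^{[0]}_{\mu,i;\un,0}$ for $i\geq 1$, which matches precisely the range $1\leq i\leq r$ of the internal sum. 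Substituting yields the claimed identity.

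Third, for the finite-dependence assertion, I would invoke genus-counting. The genus-zero topological recursion relation forces $\Omega^{[0]}_{\mu,i;\un,0}$ to be a function of the $v_\alpha$ alone, so $\d_x^{i-1}\Omega^{[0]}_{\mu,i;\un,0}$ is a polynomial in $v,v_1,\dots,v_{i-1}$. The principal hierarchy equation $\d v_\alpha/\d t_{\beta,q} = \d_x \Omega^{[0]}_{\alpha,0;\beta,q}(v)$ makes $\d v_{\nu,r-i}/\d q_{\nu,j}$ a polynomial in finitely many $v_{\cdot,\ell}$. The factor $\d w_\alpha/\d v_{\nu,r}$ has an $\hbar^g$-coefficient depending on finitely many $v$-variables since $v\mapsto w$ is itself a weak quasi-Miura transformation. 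Combining, each $\hbar^g$-coefficient of the resulting operator depends on finitely many variables $\mathbf{v}$, and therefore on finitely many variables $\mathbf{w}$ via the inverse quasi-Miura transformation.

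The main potential obstacle is bookkeeping rather than technical: one must carefully track the $F_0 \leftrightarrow \tilde F_0$ modification to ensure the quadratic correction in~\eqref{eq:F-0} does not contaminate the identification $\d_x U = \Omega^{[0]}_{\mu,i;\un,0}$, and one must verify that the combinatorial coefficients $\binom{r}{i}$ produced by Lemma~\ref{lem:U-q} survive unchanged under the substitution $\d_x^i U\mapsto \d_x^{i-1}\Omega^{[0]}_{\mu,i;\un,0}$. Both are direct index checks.
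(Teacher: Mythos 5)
Your proposal is correct and is exactly the argument the paper intends: the corollary is stated as an immediate consequence of Lemma~\ref{lem:U-q} with $U=\d\tilde F_0/\d q_{\mu,i}$ and $q=q_{\nu,j}$, and your identification $\d_x U=\Omega^{[0]}_{\mu,i;\un,0}$ (using that the quadratic correction in~\eqref{eq:F-0} drops out) together with the TRR/string finiteness check supplies precisely the details the paper leaves implicit. The only caveat is a notational collision, already present in the paper's own statement, between the fixed index $i$ in $\Omega^{[0]}_{\mu,i;\un,0}$ and the summation index $i$ in $\sum_{i=1}^r\binom{r}{i}\d_x^{i-1}(\cdot)$, which you inherit but which does not affect correctness.
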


This proves, in particular, that the deformation formula~\eqref{eq:def-quasi-miura} preserves the class of operators that are well-defined on the space of functions depending on a finite number of variables $\mathbf{v}$ and $\mathbf{w}$ in each term of the $\hbar$-expansion.


\subsection{The $s$-deformation formula}
In this section, we prove a formula for the $s$-deformation of the weak quasi-Miura transformation
$L_{\beta}^{\alpha}:=\sum_{s=0}^\infty L^{\alpha}_{\beta,s}\d_x^s$, where $L_{\beta,s}^{\alpha}=\d w_{\alpha}/\d v_{\beta,s}$.

\begin{theorem} \label{thm:s-def-quasi-miura} We have:
\begin{align}\label{eq:s-def-quasi-miura}
s[w].\sum_{s=0}^\infty L^\alpha_{\beta,s}\d_x^s & =  -\sum_{\delta}(s_1)_{\delta,\un} \frac{\d}{\d w_{\delta,0}} \sum_{s=0}^\infty L^\alpha_{\beta,s}\d_x^s.
\end{align}
\end{theorem}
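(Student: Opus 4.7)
My plan is to follow the strategy of the proof of Theorem~\ref{thm:def-quasi-miura} with $r$ replaced by $s$ throughout. The same chain-rule computation that led to Equation~\eqref{equation:deformation formula1} produces its $s$-analog:
\begin{align*}
s[w].L_{\beta,s}^{\alpha} & = \frac{\d(s[t].w_{\alpha})}{\d v_{\beta,s}}-\sum_{\gamma,n}\frac{\d(s[t].v_{\gamma,n})}{\d v_{\beta,s}}\frac{\d w_{\alpha}}{\d v_{\gamma,n}} \\
& \quad -\sum_{\delta,m}s[t].w_{\delta,m}\frac{\d}{\d w_{\delta,m}}\frac{\d w_{\alpha}}{\d v_{\beta,s}}.
\end{align*}
The task thus reduces to an explicit evaluation of the $t$-deformations $s[t].v_\gamma$ and $s[t].w_\delta$.

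Next I would unwind these deformations using~\eqref{eq:s-hat}, keeping only its first summand. Since $\widehat{s_\ell}$ carries no $1/\hbar$ piece (unlike $\widehat{r_\ell}$), a short manipulation at the level of $\tilde F=\hbar\log\tilde Z$ gives
\begin{equation*}
s[t].\tilde F = \tfrac{1}{2}\sum_{\substack{i+j=-\ell-1\\ \mu,\nu}}(-1)^{i+1}(s_\ell)_{\mu,\nu}\left[\hbar\frac{\d^2\tilde F}{\d q_{\mu,i}\d q_{\nu,j}} + \frac{\d\tilde F}{\d q_{\mu,i}}\frac{\d\tilde F}{\d q_{\nu,j}}\right].
\end{equation*}
Since $i+j\leq -2$, at most one of the two indices is non-negative, and any factor $\d\tilde F/\d q_{\mu,i}$ with $i<0$ is the linear monomial $(-1)^{-1-i}q_{\mu,-1-i}$ arising from the dilaton correction in $\tilde F_0$. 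Computing $s[t].w_\alpha=\d^2(s[t].\tilde F)/(\d q_{\alpha,0}\d q_{\un,0})$ restricted to $q_{<0}=0$ and splitting by cases, the ``double-negative'' pair $(i,j)=(-1-k_1,-1-k_2)$ with $k_1+k_2=\ell-1$ only survives the double differentiation when $k_1=k_2=0$, which forces $\ell=1$ and contributes the constant $(s_1)_{\alpha,\un}$ by the symmetry of $s_1$. The ``mixed'' pairs (one index non-negative, the other $\leq -\ell-1$) sum to contributions of the shape $\sum_{\mu,\nu,k\geq 0}(s_\ell)_{\mu,\nu}q_{\nu,k+\ell}\,\d v_\alpha/\d t_{\mu,k}$, together with the analogous higher-genus pieces.

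To finish, I would show that the mixed contributions to $\d(s[t].w_\alpha)/\d v_{\beta,s}$ and to $\sum_{\gamma,n}\d(s[t].v_{\gamma,n})/\d v_{\beta,s}\cdot\d w_\alpha/\d v_{\gamma,n}$ cancel exactly; this is the crux of the argument. The cancellation follows from the string equation $\d F_g/\d t_{\un,0}=\sum_{\nu,k}t_{\nu,k+1}\d F_g/\d t_{\nu,k}+(\mathrm{quadratic})$, combined with the topological recursion relations and the fact that on the jet space the variables $t_{\nu,k+1}$ with $k\geq 0$ (equivalently the shifted $q_{\nu,k+1}$) are independent parameters whose $v$-derivatives are controlled by the Miura structure. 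What remains is only the ``double-negative'' constant $(s_1)_{\delta,\un}$, which enters the last summand via $s[t].w_{\delta,0}=(s_1)_{\delta,\un}$ and $s[t].w_{\delta,m}=0$ for $m\geq 1$. Substituting yields $s[w].L^\alpha_{\beta,s}=-\sum_\delta(s_1)_{\delta,\un}\,\d L^\alpha_{\beta,s}/\d w_{\delta,0}$, and composing with $\d_x^s$ and summing gives~\eqref{eq:s-def-quasi-miura}. The hardest step is precisely the mixed-term cancellation — parallel in spirit to Corollary~\ref{cor:finite} in the $r$-case, but now producing the stronger conclusion of exact vanishing rather than merely finite dependence on $\mathbf{v},\mathbf{w}$.
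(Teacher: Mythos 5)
Your proposal is correct and follows essentially the same route as the paper: your chain-rule reduction is the paper's Equation~\eqref{eq:s-chain-w-v} (the paper phrases it through $s[v].$ rather than fully expanding to the $t$-level, but the two forms are equivalent, exactly as in the derivation of~\eqref{equation:deformation formula1}), and the paper then simply quotes the outcome of your explicit computation as the one-line claim $s[v].w_\mu=(s_1)_{\mu,\un}-\sum_\nu(s_1)_{\nu,\un}\,\d w_\mu/\d v_{\nu,0}$. The one substantive correction concerns where you locate the difficulty. The ``mixed-term cancellation'' you call the crux does not rest on the string equation: the mixed part of $s[t].w_\alpha$ (resp.\ $s[t].v_{\gamma,n}$) has the form $U\,\d w_\alpha/\d q_{\mu,i}$ (resp.\ $U\,\d v_{\gamma,n}/\d q_{\mu,i}$) with $U=\pm(s_\ell)_{\mu\nu}q_{\nu,\ell+i}$, and the only inputs needed are (a) the topological recursion relations, which make $w_\alpha$ factor through the jet variables $\mathbf v$ so that $\d w_\alpha/\d q_{\mu,i}=\sum_{\gamma,n}(\d v_{\gamma,n}/\d q_{\mu,i})(\d w_\alpha/\d v_{\gamma,n})$, and (b) $\d_xU=0$, since $\ell+i\geq 1$ and the $t_{\nu,k}$ are independent of $t_{\un,0}$. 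Given these, the three mixed contributions cancel by the pure chain-rule identity of Lemma~\ref{lem:U-q}: its right-hand side contains only $\d_x^iU$ with $i\geq 1$ and hence vanishes identically here --- in contrast with the $r$-case, where $U=\d\tilde F_0/\d q_{\mu,i}$ is not annihilated by $\d_x$ and one gets only finite dependence (Corollary~\ref{cor:finite}) rather than exact vanishing. So the step you flag as hardest is already available from Section~3; the remaining ingredients of your computation (only the $\ell=1$ double-negative pair surviving and yielding the constant $(s_1)_{\alpha,\un}$ by symmetry of $s_1$, the absence of a $1/\hbar$ part in~\eqref{eq:s-hat}, and the vanishing of the $\hbar\,\d^2\tilde F/\d q_{\mu,i}\d q_{\nu,j}$ contribution for $i+j=-\ell-1\leq -2$) all check out.
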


\begin{proof} The proof is again a straightforward computation. As in the case of $r$-deformation, we use the chain rule:
\begin{equation}\label{eq:s-chain-w-v}
s[w].L^{\alpha}_{\beta,s} =s[v].L^{\alpha}_{\beta,s}-\sum_{\delta,m}s[v].w_{\delta,m}\frac{\d L^{\alpha}_{\beta,s}}{\d w_{\delta,m}}.
\end{equation}
One can easily compute that $s[v].w_\mu=(s_1)_{\mu,\un}-\sum_{\nu}(s_1)_{\nu,\un} \d w_{\mu}/\d v_{\nu,0}$.
Using this and the fact that $s[v].$ commutes with $\d/\d v_{\beta,s}$, we obtain 
\begin{align}
s[v].L^{\alpha}_{\beta,s} & = -\sum_{\nu}(s_1)_{\nu,\un} \frac{\d}{\d v_{\nu,0}} L^{\alpha}_{\beta,s}
\\ \notag
\sum_{\delta,m}s[v].w_{\delta,m}\frac{\d L^{\alpha}_{\beta,s}}{\d w_{\delta,m}}
& = \sum_{\delta} (s_1)_{\delta,\un} \frac{\d L^{\alpha}_{\beta,s}}{\d w_{\delta,0}}
- \sum_{\nu,\delta,m}(s_1)_{\nu,\un}\frac{\d w_{\delta,m}}{\d v_{\nu,0}}\frac{\d L^{\alpha}_{\beta,s}}{\d w_{\delta,m}}
\end{align}
The difference of these two expressions is equal to the right hand side of Equation~\eqref{eq:s-def-quasi-miura}.
\end{proof}




\section{The deformation formula}

Let us recall the deformation formula for the bracket. We use the following notations:
\begin{align}
\delta_{\xi}& := \frac{\delta}{\delta w_\xi} = \sum_{n=0}^\infty (-\d_x)^n\circ \frac{\d}{\d w_{\xi,n}};
\\ \notag
\T_{\xi,k} & := \sum_{n=0}^\infty \binom{n}{k} (-\d_x)^{n-k} \circ\frac{\d}{\d w_{\xi,n}}.
\end{align}
We use the convention that $\binom{n}{k}=0$ if $n\geq 0$ and $k<0$ or $k>n$.

In the case of $r$-deformations, we have in~\cite{BurPosSha} the following formula:
\begin{align}\label{eq:r-def-bracket}
& \sum_{s=1}^{\infty} \left(r[w]. A_{s}^{\beta\xi}\right) \d_x^s 
= \sum_{i+j=\ell-1} (-1)^{i+1} (\r_\ell)_{\mu\nu} \left[ \phantom{\sum_{s\ge 1,\xi}^\infty}\right. \\
\tag{I} &  \Omega_{\un,0;\nu,j} \sum_{\gamma,n} \frac{\d \Omega_{\mu,i;\beta,0}}{\d w_{\gamma,n}} \d_x^{n}\circ 
\sum_{s\ge 1,\xi}^\infty A^{\gamma,\xi}_s \d_x^s 
\\ & \tag{II} 
 - \sum_{\gamma,n} \sum_{a+b=n} \binom{n+1}{a}\d_x^b \Omega_{\un,0;\nu,j} \d_x^a \Omega_{\mu,i;\gamma,0}  
\sum_{s\ge 1,\xi} \frac{\d A^{\beta,\xi}_s}{\d w_{\gamma,n}} \d_x^s  
\\ &  \tag{III} 
+ \sum_{s\ge 1,\gamma} A^{\beta,\gamma}_s \sum_{f+e=s-1} \d_x^f \circ \Omega_{\un,0;\nu,j}
\d_x^e \circ \sum_{n=0}^\infty \T_{\gamma,n}\Omega_{\mu,i;\xi,0} (-\d_x)^{n+1}
\end{align}
\begin{align}
&  \tag{IV}
+ \Omega_{\beta,0;\nu,j}\sum_{\gamma,n}\frac{\d \Omega_{\mu,i;\un,0}}{\d w_{\gamma,n}} \d_x^n\circ \sum_{s\ge 1,\xi} A^{\gamma,\xi}_s \d_x^s 
\\ &  \notag
+ \sum_{s\ge 1,\gamma} A^{\beta\gamma}_s \d_x^s\circ 
\sum_{0\leq u\leq v} \left( \phantom{\binom{v}{u}} \right. \notag 
\\ &  \tag{V}
\binom{v}{u}
\T_{\gamma,v+1}\Omega_{\un,0;\nu,j} (-\d_x)^{v-u} \Omega_{\mu,i;\xi,0} (-\d_x)^{u+1}   
 \\ &   \tag{VI}
- 
\left. 
\binom{v+1}{u}
(-\d_x)^{v-u}\Omega_{\un,0;\nu,j} \T_{\gamma,v+1} \Omega_{\mu,i;\xi,0} (-\d_x)^{u+1}  
\right)
\\ 
&  \tag{VII} 
+ \sum_{s\ge 1,\gamma} A^{\beta\gamma}_s \sum_{e+f=s-1} \binom{s}{e} \d_x^e 
\delta_\gamma \Omega_{\un,0;\nu,j} \d_x^f \circ \Omega_{\mu,i;\xi,0} \d_x 
\\ &   \tag{VIII} 
-\d_x\Omega_{\beta,0;\nu,j-1}
\sum_{\gamma,m} \sum_{u=0}^{m-1} 
(-\d_x)^u \d_{\gamma,m} \Omega_{\mu,i+1;\un,0} \d_x^{m-1-u}\circ \sum_{s=1}^\infty A^{\gamma\xi}_s\d_x^s  
\\ &  \tag{IX}
-\d_x\Omega_{\beta,0;\nu,j-1}  \sum_{\gamma}\sum_{2\leq f \leq s}^\infty 
(-\d_x)^{s-f} \left( A^{\gamma\xi}_s \delta_{\gamma} \Omega_{\mu,i+1;\un,0}\right) \d_x^{f-1}
\\ 
&  \tag{X} 
+\frac{\hbar}{2} \left( 
\d_x \circ 
\sum_{\gamma, n} \frac{\d \Omega_{\beta,0;\mu,i;\nu,j}}{\d w_{\gamma,n}}
\d_x^{n} 
\circ \sum_{s\ge 1}  A^{\gamma,\xi}_s \d_x^s \right.
\\ 
&  \tag{XI}
+
\sum_{s\ge 1,\gamma}  A_{s}^{\beta\gamma} \d_x^s 
\circ 
\sum_{m=0}^\infty 
\T_{\gamma,m} \Omega_{\xi,0;\mu,i;\nu,j}
 (-\d_x)^{m+1}  
 \\ & \tag{XII}
\left.\left. - \sum_{n=0}^\infty\sum_{\zeta} \d_x^{n+1} 
\Omega_{\zeta,0;\mu,i;\nu,j} \sum_{s\ge 1} \frac{\d A_s^{\beta\xi}}{\d w_{\zeta,n}} 
\d_x^s\right)\right],
\end{align}
$\ell=1,2,3,\dots$. In the case of $s$-deformations, the formula looks like 
\begin{align}\label{eq:s-deformation-bracket}
& \sum_{s=1}^{\infty} \left(\sum_{\ell=1}^\infty s[w]. A_{s}^{\beta\xi}\right) \d_x^s 
= -\sum_{s=1}^{\infty} \left(\sum_{\gamma} (\s_1)_{\gamma,\un} \frac{\d A_{s}^{\beta\xi}}{\d w_{\gamma,0}}\right) \d_x^s.
\end{align}

We are going to reprove these formulas using Theorems~\ref{thm:def-quasi-miura} and~\ref{thm:s-def-quasi-miura}.

\subsection{The $r$-deformation formula}

\begin{theorem}\label{thm:r-def-bracket} Let $L$ be the operator of quasi-Miura tranformation discussed in Section~\ref{sec:r-def-L}. Then the 
deformation formula 
\begin{equation}\label{eq:r-def-bracket-L}
\sum_\alpha r[w].L^\beta_\alpha \circ \d_x \circ \left(L^*\right)^\xi_\alpha 
+ \sum_\alpha L^\beta_\alpha \circ \d_x \circ \left(r[w].L^*\right)^\xi_\alpha   
\end{equation}
is equal to the right hand side of Equation~\eqref{eq:r-def-bracket}.
\end{theorem}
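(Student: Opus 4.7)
The plan is to substitute the formula~\eqref{eq:def-quasi-miura} from Theorem~\ref{thm:def-quasi-miura} (together with its formal adjoint) into~\eqref{eq:r-def-bracket-L}, then to expand $r[t].w_\alpha$ and $r[t].v_\alpha$ using the explicit form of the Givental operator~\eqref{eq:r-hat}, and finally to match the result term by term with the lines (I)--(XII) of~\eqref{eq:r-def-bracket}. The right hand side of~\eqref{eq:def-quasi-miura} has three pieces: an outer composition on the left involving $\d(r[t].w_\alpha)/\d w_{\gamma,n}$, an outer composition on the right involving $\d(r[t].v_\gamma)/\d v_{\beta,n}$, and a shift term $-r[t].w_{\delta,m}\,\d L^\alpha_{\beta,s}/\d w_{\delta,m}$. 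When these are inserted into~\eqref{eq:r-def-bracket-L} and combined with the adjoint contribution, the outer-left piece of $r[w].L^\beta_\alpha$ and the outer-right piece of $r[w].(L^*)^\xi_\alpha$ bracket the factor $\sum_\alpha L^\beta_\alpha\circ \d_x\circ (L^*)^\xi_\alpha = \sum_{s\geq 1}A^{\beta\xi}_s \d_x^s$, producing families whose coefficients already involve $A^{\beta\xi}_s$. The inner pieces (the right piece of $r[w].L$ and the left piece of $r[w].L^*$) combine with the middle $\d_x$ and reduce, via~\eqref{eq:L-var-der} and Lemma~\ref{lem:useful-lemma}, so that the inner $L$-factors disappear.

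Next I would evaluate $r[t].w_\alpha$ and $r[t].v_\alpha$ explicitly. Using $w_\alpha=\Omega_{\alpha,0;\un,0}$ and $v_\alpha=\Omega^{[0]}_{\alpha,0;\un,0}$ together with~\eqref{eq:r-hat}, these produce for each $(\mu,\nu,i,j)$ with $i+j=\ell-1$ the following types of summands: a quadratic piece $(\d\tilde F_0/\d q_{\mu,i})(\d w_\alpha/\d q_{\nu,j})$ (plus its $\mu\leftrightarrow\nu$ counterpart), a term $(\hbar/2)\,\Omega_{\alpha,0;\un,0;\mu,i;\nu,j}$ coming from the double derivative, and further terms that do not contribute in view of Lemma~\ref{lem:U-q}. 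Using $\d\tilde F_0/\d q_{\mu,i}=\Omega_{\mu,i;\un,0}$ and the string-equation consequence $\d_{t_{\un,0}}\Omega_{\cdot,j-1;\un,0}=\Omega_{\cdot,j;\un,0}$, each quadratic piece, after the extra $t_{\un,0}$-derivative involved in $w$ (versus $v$), splits into a factor of the form $\Omega_{\un,0;\nu,j}$ or $\Omega_{\beta,0;\nu,j-1}$ times a two- or three-point correlator with subscript $\mu,i$.

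Assembling these pieces with the three summands of~\eqref{eq:def-quasi-miura} and their adjoints produces, by direct bookkeeping, the twelve lines of~\eqref{eq:r-def-bracket}. The quadratic piece from $r[t].w$ with $\d w_\alpha/\d q_{\nu,j}$ gives the families (I) (from the outer left composition) and (IV) (from the symmetric choice of which factor receives the $\d_{t_{\un,0}}$), together with (II)--(III) and (V)--(VII) from the shift term and the inner composition. The analogous quadratic piece from $r[t].v$, involving $\d v_\gamma/\d q_{\nu,j-1}$, gives lines (VIII)--(IX); here the index shift $j\mapsto j-1$ and the outer $\d_x\Omega_{\beta,0;\nu,j-1}$ come from the string-equation identity together with Lemma~\ref{lem:useful-lemma} used to pull a $\d_x$ outside. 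The $\hbar$-part produces (X)--(XII), where the pattern involving $\T_{\gamma,n}$ encodes the formal-adjoint structure of $L^*$. The potentially ``bad'' contributions from Section~3.2 are controlled by Corollary~\ref{cor:finite} and therefore cause no trouble.

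The main obstacle will be the combinatorial matching of binomial coefficients appearing in (II), (V)--(VII), and (IX). These arise from iterated applications of the Leibniz rule when $\d_x^n$ acts on products of $\Omega$'s and from unpacking the definition of $\T_{\gamma,k}$; verifying them requires several uses of Lemma~\ref{lem:useful-lemma} together with standard binomial identities. A secondary but delicate point is ensuring that the adjoint contribution $L^\beta_\alpha\circ \d_x\circ r[w].(L^*)^\xi_\alpha$ pairs correctly with $r[w].L^\beta_\alpha\circ \d_x\circ (L^*)^\xi_\alpha$: many of the corresponding terms should combine into commutators with $\d_x$, accounting for the shifted binomials such as $\binom{n+1}{a}$ and $\binom{v+1}{u}$ appearing in~\eqref{eq:r-def-bracket}. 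Once all signs and coefficients are tracked, the identification with~\eqref{eq:r-def-bracket} should be immediate.
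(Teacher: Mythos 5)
Your overall strategy -- substitute Theorem~\ref{thm:def-quasi-miura} and its adjoint into~\eqref{eq:r-def-bracket-L}, expand $r[t].w$ and $r[t].v$ via~\eqref{eq:r-hat}, and match against the twelve lines -- is indeed the paper's strategy, but your sketch has two substantive problems. First, the bookkeeping is misattributed. The contribution of the second summand of~\eqref{eq:def-quasi-miura} (the $r[t].v$ piece, sandwiched as $L\circ(\cdots)\circ L^*$) does not ``reduce so that the inner $L$-factors disappear'' and it does not produce lines (VIII)--(IX); it vanishes identically. Showing this is a nontrivial step (Lemma~\ref{lem:r-1}): one needs the topological recursion relation, the symmetry~\eqref{eq:triomega}, and a sign cancellation between the $i$ and $i-1$ terms after resumming over $i+j=\ell-1$. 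Lines (VIII)--(IX) in fact arise from the $r[t].w$ (external) terms, via the identity $\sum_{s\ge1}(-\d_x)^{s-1}\circ A^{\gamma\xi}_s\delta_\gamma\Omega_{\mu,i;\un,0}=\Omega_{\mu,i-1;\xi,0}$ applied inside~\eqref{eq:d-x-1}. Your proposed derivation of (VIII)--(IX) from the $v$-piece would not go through.

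Second, and more seriously, you have no mechanism for eliminating the residual terms proportional to $\d\tilde F/\d q_{\mu,i}$. After the term-by-term matching one is left with extra summands (the analogues of~\eqref{eq:extra-terms} and~\eqref{eq:extra-terms-2}) carrying the non-local factor $\d\tilde F/\d q_{\mu,i}$, which appears nowhere on the right hand side of~\eqref{eq:r-def-bracket}; these must be shown to cancel among themselves. You delegate this to Corollary~\ref{cor:finite}, but that corollary only guarantees that the \emph{total} operator is well defined and depends on finitely many variables -- it does not make the leftover terms vanish. The actual cancellation (Section~\ref{sec:proofthmr}) requires recognizing that the sum of these leftovers, together with the vanishing expression~\eqref{eq:delta-int}, has exactly the structure of the deformation formula~\eqref{eq:def-quasi-miura} with $r[t].$ replaced by $\d/\d q_{\nu,j}$, and then invoking Lemma~\ref{lem:U-q} with $U=1$ to conclude that this expression is identically zero. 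Without this step (and the auxiliary Lemma~\ref{lem:int2}) the proof is incomplete: you would obtain the twelve lines plus uncontrolled extra terms.
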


The proof of this theorem is based on several easy observations about the structure of the deformation formula~\eqref{eq:def-quasi-miura}. It is just a sequence of very straightforward computations in Sections~\ref{sec:internalr}-~\ref{sec:proofthmr}, where the origin and the meaning of each term in Equation~~\eqref{eq:r-def-bracket} becomes completely clear. 

In order to simplify the exposition of our calculations, we split the formula~\eqref{eq:r-def-bracket-L} into several summands that are not well-defined as operators actions on a class of functions whose coefficients depend on an finite number of variables $\mathbf{v}$ and $\mathbf{w}$. We work with these summands independently. So, according to Corollary~\ref{cor:finite}, only the total sum of all our operators (whatever they turn out to be in our computations) is a well-defined operator whose coefficients depend on a finite numbers of variables $\mathbf{v}$ and $\mathbf{w}$ in each term of $\hbar$-expansion.

\subsection{Internal terms}\label{sec:internalr}
Let us begin with the following two terms that come from the second summand on the right hand side of Equation~\eqref{eq:def-quasi-miura}:
\begin{align}
& L^\beta_\alpha\circ \sum_{p\geq 0}\frac{\d r[t].v_{\alpha,0}}{\d v_{\omega,p}}\circ \d_x^{p}\circ \d_x \circ \left(L^*\right)^\xi_\omega
\\ \notag
& + L^\beta_\alpha\circ \d_x\circ \sum_{p\geq 0} (-\d_x)^p \circ \frac{\d r[t].v_{\omega,0}}{\d v_{\alpha,p}} \circ \left(L^*\right)^\xi_\omega
\end{align}
We call these terms ``internal'' since they have a form of the same quasi-Miura transformation of a deformed operator that depends only on genus $0$ data. This sum is equal to zero because of the following lemma.

\begin{lemma}\label{lem:r-1}
We have:
\begin{equation}\label{eq:int-terms}
\sum_{p\geq 0}\frac{\d r[t].v_{\alpha,0}}{\d v_{\beta,p}}\circ \d_x^{p}\circ \d_x
+ \d_x\circ \sum_{p\geq 0} (-\d_x)^p \circ \frac{\d r[t].v_{\beta,0}}{\d v_{\alpha,p}} =0.
\end{equation}
\end{lemma}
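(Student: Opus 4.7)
The key observation to exploit is that $r[t].v_\alpha$ is itself a total $\partial_x$-derivative: from the explicit formula displayed just before Corollary~\ref{cor:finite}, one has $r[t].v_\alpha=\partial_x\Phi_\alpha$, with
\[
\Phi_\alpha:=\frac{1}{2}\sum_{\ell\ge 1}\sum_{\substack{i+j=\ell-1\\ \mu,\nu}}(-1)^{i+1}(r_\ell)_{\mu\nu}\frac{\partial}{\partial q_{\alpha,0}}\!\left(\frac{\partial\tilde F_0}{\partial q_{\mu,i}}\frac{\partial\tilde F_0}{\partial q_{\nu,j}}\right).
\]

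First, I would substitute $r[t].v_\alpha=\partial_x\Phi_\alpha$ into (\ref{eq:int-terms}). Lemma~\ref{lem:useful-lemma} turns the first summand into $\partial_x\circ B_{\alpha\beta}\circ\partial_x$, where
\[
B_{\alpha\beta}:=\sum_{p\ge 0}\frac{\partial\Phi_\alpha}{\partial v_{\beta,p}}\,\partial_x^p.
\]
The formal adjoint of that same lemma (i.e.\ $\sum_p(-\partial_x)^p(\partial(\partial_xU)/\partial v_{\alpha,p})=-\sum_p(-\partial_x)^p(\partial U/\partial v_{\alpha,p})\circ\partial_x$) converts the second summand of (\ref{eq:int-terms}) into $-\partial_x\circ B_{\beta\alpha}^{*}\circ\partial_x$. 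Hence (\ref{eq:int-terms}) reduces to the single operator equality
\[
B_{\alpha\beta}=B_{\beta\alpha}^{*}.
\]

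Second, I would establish this symmetry by exhibiting $\Phi_\alpha$ as the variational derivative of a single local density. The natural candidate is
\[
R:=\tfrac12\sum_{\ell,\,i+j=\ell-1,\,\mu,\nu}(-1)^{i+1}(r_\ell)_{\mu\nu}\frac{\partial\tilde F_0}{\partial q_{\mu,i}}\frac{\partial\tilde F_0}{\partial q_{\nu,j}}.
\]
Modulo a careful translation between $\partial/\partial q_{\alpha,0}$ and $\delta/\delta v_\alpha$ on genus-zero densities (which is compatible with the genus-zero tameness/topological recursion), we get $\Phi_\alpha=\delta R/\delta v_\alpha$. Then $B_{\alpha\beta}$ is the Hessian $\delta^2R/(\delta v_\alpha\delta v_\beta)$, and the required identity $B_{\alpha\beta}=B_{\beta\alpha}^{*}$ is the classical Helmholtz self-adjointness of second variational derivatives. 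Conceptually, this is the statement that the $r$-action on the principal hierarchy preserves the dispersionless bracket $\partial_x$, i.e.\ acts by Hamiltonian vector fields.

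The main obstacle will be the identification $\Phi_\alpha=\delta R/\delta v_\alpha$: one must track how the factors $\partial\tilde F_0/\partial q_{\mu,i}$ depend on the $\mathbf v$-jet along the topological solution and how the signs produced by integration by parts in the variational derivative reproduce those in the Leibniz expansion of $\Phi_\alpha$. A purely computational fallback, should this bookkeeping become awkward, is to expand $B_{\alpha\beta}-B_{\beta\alpha}^{*}$ termwise and exhibit the cancellation directly, using the skew-adjointness $(r_\ell)^t=(-1)^{\ell+1}(r_\ell)$ of the Givental generator together with the symmetry of the mixed partials of $\tilde F_0$.
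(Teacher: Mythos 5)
Your reduction is fine: writing $r[t].v_\alpha=\d_x\Phi_\alpha$ and applying Lemma~\ref{lem:useful-lemma} and its formal adjoint does turn \eqref{eq:int-terms} into $\d_x\circ\bigl(B_{\alpha\beta}-B^*_{\beta\alpha}\bigr)\circ\d_x$ with $B_{\alpha\beta}=\sum_p\frac{\d\Phi_\alpha}{\d v_{\beta,p}}\d_x^p$, and the identity $B_{\alpha\beta}=B^*_{\beta\alpha}$ is true and is in effect what the paper verifies. The gap is in your mechanism for proving it. First, $R$ is not a local density: the factors $\d\tilde F_0/\d q_{\mu,i}$ are precisely the ``bad'' functions that do not reduce to functions of finitely many jet variables of $\mathbf v$ (only $\d_x\bigl(\d\tilde F_0/\d q_{\mu,i}\bigr)=\Omega^{[0]}_{\un,0;\mu,i}$ does), so the classical Helmholtz criterion is not available off the shelf. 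Second, and decisively, the identification $\Phi_\alpha=\delta R/\delta v_\alpha$ fails: $\d/\d q_{\alpha,0}$ does not translate into $\delta/\delta v_\alpha$. Writing $\d\tilde F_0/\d q_{\mu,i}=\d_x^{-1}\Omega^{[0]}_{\un,0;\mu,i}$, varying $\oint R\,dx$ and integrating by parts produces terms in which $\d_x^{-1}$ acts on the already nonlocal factors; after reindexing, $\delta R/\delta v_\alpha$ would coincide with $\Phi_\alpha$ only if $\d_x\bigl(\d\tilde F_0/\d q_{\mu,i}\bigr)$ were equal to $\d\tilde F_0/\d q_{\mu,i+1}$, which is false (the left-hand side is $\Omega^{[0]}_{\un,0;\mu,i}$). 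So the self-adjointness of $B_{\alpha\beta}$ is not a consequence of $\Phi_\alpha$ being a gradient, and the conceptual shortcut does not close.

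The actual cancellation has two distinct parts, only one of which is of Helmholtz type. Using Lemma~\ref{lem:useful-lemma}, the fact that $\Omega^{[0]}$ depends on $v_{\cdot,0}$ alone, and the genus-zero topological recursion relation in the form $\sum_p\frac{\d\Omega^{[0]}_{\un,0;\mu,i}}{\d v_{\beta,p}}\d_x^p=\Omega^{[0]}_{\beta,0;\mu,i-1}$, one finds for each fixed $(i,j)$ that $B_{\alpha\beta}$ splits into a multiplication part $\frac{\d\tilde F_0}{\d q_{\mu,i}}\frac{\d\Omega^{[0]}_{\alpha,0;\nu,j}}{\d v_{\beta,0}}$, which cancels against its counterpart by the symmetry \eqref{eq:triomega} of third derivatives of $F_0$, and a nonlocal part $\Omega^{[0]}_{\alpha,0;\nu,j}\circ\d_x^{-1}\circ\Omega^{[0]}_{\beta,0;\mu,i-1}$, which does \emph{not} cancel for fixed $(i,j)$: it cancels only after summing over $i+j=\ell-1$, by the index shift $(i,j)\mapsto(i-1,j+1)$ played against the alternating sign $(-1)^{i+1}$. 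Your fallback plan points toward this computation but omits its two essential ingredients, the topological recursion relation and the index-shift cancellation over the sum; the ``symmetry of mixed partials'' you invoke accounts only for the local part.
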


\begin{proof} We use that 
\begin{equation}
r[t].v_{\alpha,0}=\frac{1}{2}\sum_{l\geq 1}\sum_{i+j=l-1}(-1)^{i+1}(r_l)_{\mu\nu} \d_x\frac{\d}{\d q_{\alpha,0}}\left(\frac{\d \tilde F_0}{\d q_{\mu,i}}\frac{\d \tilde F_0}{\d q_{\nu,j}}\right)
\end{equation}
(and below we usually omit $\sum_{l\geq 1}\sum_{i+j=l-1}(-1)^{i+1}(r_l)_{\mu\nu}$ for briefness). Using Lemma~\ref{lem:useful-lemma} and the topological recursion relation we have:
\begin{align}
& \sum_{p\geq 0}\frac{\d r[t].v_{\alpha,0}}{\d v_{\beta,p}}\d_x^{p}
=\d_x \circ \sum_{p\geq 0}\frac{\d \left(\frac{\d \tilde F_0}{\d q_{\mu,i}}\Omega^{[0]}_{\alpha,0;\nu,j}\right)}{\d v_{\beta,p}}\d_x^{p}
\\ \notag &
=\d_x \circ \Omega^{[0]}_{\alpha,0;\nu,j} \sum_{p\geq 0}\frac{\d \left(\frac{\d \tilde F_0}{\d q_{\mu,i}}\right)}{\d v_{\beta,p}} \d_x^{p}
+\d_x \circ \frac{\d \tilde F_0}{\d q_{\mu,i}} \sum_{p\geq 0}\frac{\d \Omega^{[0]}_{\alpha,0;\nu,j} }{\d v_{\beta,p}} \d_x^{p}
\\ \notag &
=\d_x \circ \Omega^{[0]}_{\alpha,0;\nu,j}\d_x^{-1} \circ \sum_{p\geq 0}\frac{\d \Omega^{[0]}_{\un,0;\mu,i}}{\d v_{\beta,p}}\d_x^{p}
+\d_x \circ \frac{\d\tilde F_0}{\d q_{\mu,i}} \frac{\d \Omega^{[0]}_{\alpha,0;\nu,j}}{\d v_{\beta,0}}
\\ \notag &
=\d_x \circ \Omega^{[0]}_{\alpha,0;\nu,j} \d_x^{-1} \circ \frac{\d \Omega^{[0]}_{\un,0;\mu,i}}{\d v_{\beta,0}}
+\d_x \circ \frac{\d\tilde F_0}{\d q_{\mu,i}} \frac{\d \Omega^{[0]}_{\alpha,0;\nu,j}}{\d v_{\beta,0}}
\\ \notag &
=\d_x \circ \Omega^{[0]}_{\alpha,0;\nu,j}\d_x^{-1} \circ \Omega^{[0]}_{\beta,0;\mu,i-1}
+\d_x \circ \frac{\d\tilde F_0}{\d q_{\mu,i}} \frac{\d \Omega^{[0]}_{\alpha,0;\nu,j}}{\d v_{\beta,0}}.
\end{align}
In the same way,
\begin{align}
& \sum_{p\geq 0} (-\d_x)^p \circ \frac{\d r[t].v_{\beta,0}}{\d v_{\alpha,p}}
\\ \notag &
=\Omega^{[0]}_{\alpha,0;\nu,j-1}\d_x^{-1} \circ \Omega^{[0]}_{\beta,0;\mu,i}\d_x
-\frac{\d\tilde F_0}{\d q_{\mu,i}} \frac{\d \Omega^{[0]}_{\beta,0;\nu,j}}{\d v_{\alpha,0}}\d_x
\end{align}
Therefore the left hand side of Equation~\eqref{eq:int-terms} is equal to
\begin{align}
& \d_x \circ \left( \frac{\d \tilde F_0}{\d q_{\mu,i}} \frac{\d \Omega^{[0]}_{\alpha,0;\nu,j}}{\d v_{\beta,0}} - \frac{\d \tilde  F_0}{\d q_{\mu,i}} \frac{\d \Omega^{[0]}_{\beta,0;\nu,j}}{\d v_{\alpha,0}}\right)\circ \d_x
\\ \notag &
+ \d_x \circ \Omega^{[0]}_{\alpha,0;\nu,j}\circ \d_x^{-1} \circ \Omega^{[0]}_{\beta,0;\mu,i-1} \circ \d_x
\\ \notag &
+ \d_x \circ \Omega^{[0]}_{\alpha,0;\nu,j-1}\circ \d_x^{-1} \circ \Omega^{[0]}_{\beta,0;\mu,i} \circ \d_x
\end{align}
Here the first line is equal to zero just because
\begin{equation}\label{eq:triomega}
\frac{\d \Omega^{[0]}_{\alpha,0;\mu,i}}{\d v_{\beta,0}} = \frac{\d \Omega^{[0]}_{\beta,0;\mu,i}}{\d v_{\alpha,0}} = 
\left. \frac{\d \tilde F_0}{\d q_{\mu,i} \d q_{\alpha,0} \d q_{\beta,0}}\right| _{q_{\gamma,0}=v_{\gamma,0}, q_{\gamma,\geq 1} = 0}.
\end{equation}
The second and the third lines have exactly the same summands when we take the sum over all $i+j=l-1$, but one of them comes with the sign $(-1)^{i+1}$ while the other one appears with the sign $(-1)^{(i-1)+1}$. Therefore, they cancel each other when we take the sum over all $i+j=l-1$.
\end{proof}

\subsection{External terms} In this section we compute the two terms coming from the first summand on the right hand side of Equation~\eqref{eq:def-quasi-miura}. In terms of the operator $A=\sum_{s\geq 1} A^{\beta\xi}_s \d_x^s = \sum_{\alpha} L^\beta_\alpha\circ \d_x\circ \left(L^*\right)^\xi_\alpha$ they can be written as
\begin{equation}\label{eq:external-terms}
\sum_{\gamma,n} \frac{\d(r[t].w_{\beta})}{\d w_{\gamma,n}}\d_x^n \circ \sum_s A^{\gamma\xi}_s \d_x^s
+\sum_{\gamma,s} A^{\beta\gamma}_s \d_x^s \circ \sum_{\gamma,n} (-\d_x)^n\circ \frac{\d(r[t].w_{\xi})}{\d w_{\gamma,n}}
\end{equation}

\begin{lemma}\label{lem:r-2} Formula~\eqref{eq:external-terms} is equivalent to the sum of the lines~(I), (III), (IV), (V), (VI), (VII), (VIII), (IX), (X), and (XI) on the right hand side of Equation~\eqref{eq:r-def-bracket} and the following extra terms:
\begin{align}\label{eq:extra-terms}
& \sum_{l\geq 1}\sum_{i+j=l-1}(-1)^{i+1}(r_l)_{\mu\nu}
 \frac{\d \tilde F}{\d q_{\mu,i}} \cdot \\ \notag &
 \left(\sum_{\gamma,s,n} \frac{\d \left(\frac{\d w_\beta}{\d q_{\nu,j}}\right)}{\d w_{\gamma,n}}\d_x^n \circ  
A^{\gamma\xi}_s \d_x^s  + 
 \sum_{\gamma,s,n} A^{\beta\gamma}_s \d_x^s \circ (-\d_x)^n\circ 
 \frac{\d \left(\frac{\d w_\xi}{\d q_{\nu,j}}\right)}{\d w_{\gamma,n}}\right). 
\end{align}
\end{lemma}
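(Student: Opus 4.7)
The strategy is to substitute an explicit formula for $r[t].w_\alpha$ into both summands of \eqref{eq:external-terms} and match the result, term by term, against lines (I), (III)--(XI) of \eqref{eq:r-def-bracket}, with the ``bad'' piece producing exactly the extra terms \eqref{eq:extra-terms}. I begin by applying $\widehat{r_\ell}$ to $\tilde Z$ and dividing by $\tilde Z$ to get
$$r[t].\tilde F \;=\; \sum_{\ell,\,i+j=\ell-1}(-1)^{i+1}(r_\ell)_{\mu\nu}\Bigl[\tfrac12\tfrac{\d\tilde F}{\d q_{\mu,i}}\tfrac{\d\tilde F}{\d q_{\nu,j}}+\tfrac{\hbar}{2}\tfrac{\d^2\tilde F}{\d q_{\mu,i}\d q_{\nu,j}}\Bigr]+\text{shifts},$$
where the shift terms produce only ``good'' contributions, in accordance with the remark after~\eqref{eq:s-hat}. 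Applying $\d^2/\d q_{\beta,0}\d q_{\un,0}$ and using $\d/\d t_{\un,0}=\d_x$, the quantity $r[t].w_\beta$ decomposes into the bad piece $(-1)^{i+1}(r_\ell)_{\mu\nu}(\d\tilde F_0/\d q_{\mu,i})(\d w_\beta/\d q_{\nu,j})$ plus good pieces built out of products of two-point functions $\Omega_{\cdot;\cdot}$, three-point functions $\Omega_{\cdot;\cdot;\cdot}$, and the single four-point correlator $\tfrac{\hbar}{2}\Omega_{\beta,0;\un,0;\mu,i;\nu,j}$.

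\textbf{Matching the bad and good parts.}
Substituting the bad piece of $r[t].w_\beta$ into the first summand of \eqref{eq:external-terms}, and the $\xi$-analogue bad piece of $r[t].w_\xi$ into the second summand (where the formal adjoint $(-\d_x)^n\circ(\d/\d w_{\gamma,n})$ places the same structure on the right of $A^{\beta\gamma}_s\d_x^s$), reproduces verbatim the two factors inside the parenthesis of \eqref{eq:extra-terms}. For the good pieces, I match each $\Omega$-product against a specific line: the four-point term $\tfrac{\hbar}{2}\Omega_{\beta,0;\un,0;\mu,i;\nu,j}$ contributes line (X) from the first summand and line (XI) from the second (after the adjoint is rewritten via $\T_{\gamma,n}$); the classical products $\Omega_{\beta,0;\mu,i}\Omega_{\un,0;\nu,j}$ and $\Omega_{\un,0;\mu,i}\Omega_{\beta,0;\nu,j}$ supply lines (I), (III), (IV), and the family (V)--(VII). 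Lines (VIII) and (IX), in which the indices are shifted to $(i{+}1,j{-}1)$, arise by invoking the genus-zero topological recursion relation to rewrite $\d_x\Omega^{[0]}_{\alpha,k+1;\beta,q}=\sum_\mu\Omega^{[0]}_{\alpha,k;\mu,0}\d_x\Omega^{[0]}_{\mu,0;\beta,q}$ inside one of the good summands, shifting $j\mapsto j-1$ and $i\mapsto i+1$.

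\textbf{Main obstacle.}
The technical heart of the argument is rewriting the formal adjoint $(-\d_x)^n\circ(\cdot)$ coming from the second summand of \eqref{eq:external-terms} in terms of the tautological operators $\T_{\gamma,n}$ with the precise binomial coefficients visible in lines (V), (VI), (VII), and (IX). I expect this to follow from the Leibniz expansion $(-\d_x)^n\circ f=\sum_{k\ge 0}(-1)^{n-k}\binom{n}{k}(\d_x^{n-k}f)\d_x^k$ combined with repeated applications of Lemma~\ref{lem:useful-lemma} to commute $\d_x$ past $\d/\d u_{\alpha,s}$, but making the combinatorics agree on the nose (and verifying that no spurious terms remain uncancelled) is where care will be needed. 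Once this bookkeeping is executed, every good contribution lands on one of the listed lines and the leftover pieces are exactly \eqref{eq:extra-terms}, proving the lemma.
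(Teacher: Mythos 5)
Your overall skeleton (substitute the explicit formula for $r[t].w_\alpha$, split into a ``bad'' piece $\frac{\d\tilde F}{\d q_{\mu,i}}\frac{\d w_\alpha}{\d q_{\nu,j}}$ plus ``good'' pieces, and match the good pieces line by line) is the same as the paper's. But there are two concrete gaps. First, your proposed mechanism for lines (VIII) and (IX) is wrong: you invoke the genus-zero topological recursion relation, which can only produce genus-zero quantities $\Omega^{[0]}$, whereas (VIII) and (IX) contain the full all-genus two-point function $\Omega_{\mu,i+1;\un,0}$. In the paper these lines arise instead from the skew-symmetry of $A$, i.e. $\sum_{s\geq 1}A^{\gamma\xi}_s\d_x^s=-\sum_{s\geq 1}(-\d_x)^s\circ A^{\xi\gamma}_s$, combined with the identity $\sum_{s\geq 1}(-\d_x)^{s-1}\circ A^{\gamma\xi}_s\,\delta_\gamma\Omega_{\mu,i;\un,0}=\Omega_{\mu,i-1;\xi,0}$ (the statement that the bracket applied to the variational derivative of a Hamiltonian generates the flow); this is what produces the index shift $(i,j)\mapsto(i+1,j-1)$, not the TRR.

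Second, you do not confront the fact that the intermediate expressions are not differential operators: writing $\sum_n\frac{\d}{\d w_{\gamma,n}}\bigl(\frac{\d\tilde F}{\d q_{\mu,i}}\bigr)\d_x^n$ as $\d_x^{-1}\circ\sum_n\frac{\d\Omega_{\mu,i;\un,0}}{\d w_{\gamma,n}}\d_x^n$ introduces $\d_x^{-1}$, and after the Leibniz manipulations each of the two summands of \eqref{eq:external-terms} leaves a residual pseudo-differential term, namely $-\d_x\Omega_{\nu,j;\beta,0}\,\d_x^{-1}\circ\Omega_{\xi,0;\mu,i-1}\circ\d_x$ from the first and $\d_x\Omega_{\mu,i-1;\beta,0}\,\d_x^{-1}\circ\Omega_{\nu,j;\xi,0}(-\d_x)$ from the second. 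These cancel only against each other, after relabelling under the sum over $i+j=\ell-1$ with the sign $(-1)^{i+1}$. This cross-cancellation is the one point where the two halves of \eqref{eq:external-terms} genuinely interact, and without identifying it your ``bookkeeping'' step would terminate with uncancelled non-local terms rather than the clean list (I), (III)--(XI) plus \eqref{eq:extra-terms}. (A minor further point: the shift summands of $\widehat{r_\ell}$ do not contribute at all to the deformation formula, rather than contributing ``good'' terms; this follows from Lemma~\ref{lem:U-q}.)
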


\begin{proof}We use that 
\begin{align}
& r[t].w_{\alpha,0} \\ \notag &
=\frac{1}{2}\sum_{l\geq 1}\sum_{i+j=l-1}(-1)^{i+1}(r_l)_{\mu\nu} \frac{\d^2}{\d x \d q_{\alpha,0}}\left(\frac{\d \tilde F}{\d q_{\mu,i}}\frac{\d \tilde F}{\d q_{\nu,j}} +\hbar \frac{\d^2 \tilde F}{\d q_{\mu,i}\d q_{\nu,j}} \right)
\end{align}
(and below we usually omit $\sum_{l\geq 1}\sum_{i+j=l-1}(-1)^{i+1}(r_l)_{\mu\nu}$ for briefness). Using Lemma~\ref{lem:useful-lemma} we see that \begin{align}
\sum_{n} \frac{\d(r[t].w_{\beta})}{\d w_{\gamma,n}}\d_x^n
& = \d_x \circ \sum_{n} \frac{\d}{\d w_{\gamma,n}}\left(\frac{\d \tilde F}{\d q_{\mu,i}}\Omega_{\nu,j;\beta,0}\right)\d_x^n \\ \notag
& + \frac{\hbar}{2} \d_x \circ \sum_{n} \frac{\d \Omega_{\mu,i;\nu,j;\beta,0}}{\d w_{\gamma,n}}\d_x^n
\end{align}
The composition of the second summand here and the operator $A$ gives exactly the line~(X) in Equation~\eqref{eq:r-def-bracket}.
The composition of the first summand here and the operator $A$ is equal to
\begin{align}
& \d_x \circ \sum_{n} \frac{\d}{\d w_{\gamma,n}}\left(\frac{\d \tilde F}{\d q_{\mu,i}}\Omega_{\nu,j;\beta,0}\right)\d_x^n \circ \sum_{s\geq 1} 
A^{\gamma\xi}_s \d_x^s \\ \notag &
= \mathrm{(I)} + \mathrm{(IV)} +
\sum_{n} \frac{\d}{\d w_{\gamma,n}}\left(\frac{\d \tilde F}{\d q_{\mu,i}}\d_x\Omega_{\nu,j;\beta,0}\right)\d_x^n \circ \sum_{s\geq 1} 
A^{\gamma\xi}_s \d_x^s  
\end{align}
Moreover,
\begin{align}\label{eq:d-x-1} &
\sum_{n} \frac{\d}{\d w_{\gamma,n}}\left(\frac{\d \tilde F}{\d q_{\mu,i}}\d_x\Omega_{\nu,j;\beta,0}\right)\d_x^n \circ \sum_{s\geq 1} 
A^{\gamma\xi}_s \d_x^s
\\ \notag &
= \sum_{n} \frac{\d \tilde F}{\d q_{\mu,i}} \frac{\d}{\d w_{\gamma,n}}\left(\d_x\Omega_{\nu,j;\beta,0}\right)\d_x^n \circ \sum_{s\geq 1} 
A^{\gamma\xi}_s \d_x^s  
\\ \notag &
+ \d_x\Omega_{\nu,j;\beta,0}\d_x^{-1} \circ \sum_{n} \frac{\d \Omega_{\mu,i;\un,0}}{\d w_{\gamma,n}}\d_x^n \circ \sum_{s\geq 1} 
A^{\gamma\xi}_s \d_x^s.
\end{align}
Using skewsymmetry of $A$, that is, $\sum_{s\geq 1} 
A^{\gamma\xi}_s \d_x^s=- \sum_{s\geq 1} (-\d_x)^s \circ
A^{\xi\gamma}_s$, the fact that $\sum_{s\geq 1} (-\d_x)^{s-1} \circ
A^{\gamma\xi}_s \delta_\gamma \Omega_{\mu,i;\un,0}=\Omega_{\mu,i-1;\xi,0}$, and the obvious equation $\d^{-1}\circ X \cdot \d_xY =XY - \d^{-1} \circ \d_xX\cdot Y$, we see that the last summand on the right hand side of Equation~\eqref{eq:d-x-1} is equal to
\begin{equation}\label{eq:bad-2-omega}
\mathrm{(VIII)} + \mathrm{(IX)} - \d_x  \Omega_{\nu,j;\beta,0} \d_x^{-1} \circ \Omega_{\xi,0;\mu,i-1}\circ \d_x.
\end{equation}

On the other hand,
\begin{equation}
\sum_{\gamma,s} A^{\beta\gamma}_s\d_x^s \circ \frac{\hbar}{2} \sum_{n} (-\d_x)^n \circ \frac{\d \Omega_{\mu,i;\nu,j;\xi,0}}{\d w_{\gamma,n}}\circ (-\d_x)
\end{equation}
is equal to the line (XI) in Equation~\eqref{eq:r-def-bracket}. This follows directly from the definition of the operator $\T_{\gamma,m}$.
Meanwhile,
\begin{align} \label{eq:right-h-0}
& \sum_{\gamma,s} A^{\beta\gamma}_s\d_x^s \circ \sum_{n} (-\d_x)^n \circ \frac{\d}{\d w_{\gamma,n}}\left(\frac{\d \tilde F}{\d q_{\mu,i}}\Omega_{\nu,j;\xi,0}\right)(-\d_x) \\ \notag &
=
\sum_{\gamma,s} A^{\beta\gamma}_s\d_x^s \circ \sum_{n} (-\d_x)^n \circ \frac{\d \Omega_{\mu,i;\un,0}}{\d w_{\gamma,n}}\d_x^{-1}\circ \Omega_{\nu,j;\xi,0}(-\d_x) \\ \notag &
+ \sum_{\gamma,s} A^{\beta\gamma}_s\d_x^s \circ \sum_{n} (-\d_x)^n \circ \frac{\d \tilde F}{\d q_{\mu,i}}\frac{\d \Omega_{\nu,j;\xi,0}}{\d w_{\gamma,n}}(-\d_x) 
\end{align}
For the first summand we try, using the Leibniz rule for $\d_x$, to move $\d_x^s\circ (-\d_x)^n$ to $\d_x^{-1}$. This gives
\begin{align}
& \sum_{\gamma,s} A^{\beta\gamma}_s\d_x^s \circ \sum_{n} (-\d_x)^n \circ \frac{\d \Omega_{\mu,i;\un,0}}{\d w_{\gamma,n}}\d_x^{-1}\circ \Omega_{\nu,j;\xi,0}(-\d_x) \\ \notag &
=\mathrm{(V)}+\mathrm{(VII)}+ \sum_{\gamma,s} A^{\beta\gamma}_s\d_x^s\delta_{\gamma,n}\Omega_{\mu,i;\un,0}\d_x^{-1}\circ \Omega_{\nu,j;\xi,0}(-\d_x) \\ \notag &
=\mathrm{(V)}+\mathrm{(VII)}+ \d_x\Omega_{\mu,i-1;\beta,0}\d_x^{-1}\circ \Omega_{\nu,j;\xi,0}(-\d_x).
\end{align}
Observe that the last summand here cancels the last summand in formula~\eqref{eq:bad-2-omega}. 

The second summand on the right hand side of Equation~\eqref{eq:right-h-0} we treat in the same way as the first summand: we apply the Leibniz rule to each factor in $\d_x^s\circ (-\d_x)^n$ trying to hit $\d \tilde F/\d q_{\mu,i}$. We have:
\begin{align}\label{eq:last-extr}
& \sum_{\gamma,s} A^{\beta\gamma}_s\d_x^s \circ \sum_{n} (-\d_x)^n \circ \frac{\d \tilde F}{\d q_{\mu,i}}\frac{\d \Omega_{\nu,j;\xi,0}}{\d w_{\gamma,n}}(-\d_x) \\ \notag &
= \mathrm{(VI)}+ \sum_{\gamma,s} A^{\beta\gamma}_s\d_x^s \circ \frac{\d \tilde F}{\d q_{\mu,i}}\sum_{n} (-\d_x)^n \circ \frac{\d \Omega_{\nu,j;\xi,0}}{\d w_{\gamma,n}}(-\d_x) \\ \notag &
= \mathrm{(VI)}+\mathrm{(III)}+\frac{\d \tilde F}{\d q_{\mu,i}}\sum_{\gamma,s} A^{\beta\gamma}_s\d_x^s \circ \sum_{n} (-\d_x)^n \circ \frac{\d \Omega_{\nu,j;\xi,0}}{\d w_{\gamma,n}}(-\d_x).
\end{align}

Using Lemma~\ref{lem:useful-lemma}, wee see that the last term in Equation~\eqref{eq:last-extr} and the first term on the right hand side of Equation~\eqref{eq:d-x-1} are exactly the extra terms~\eqref{eq:extra-terms} in the statement of the lemma.
\end{proof}

\subsection{Extra terms}

In this section, we deal with the terms coming from the third summand on the right hand side of Equation~\eqref{eq:def-quasi-miura}.
First of all, observe that the operator $\sum_n \d_x^n U \circ \d/\d u_n$ commutes with $\d_x$ (we use here notations of Lemma~\ref{lem:useful-lemma}, that is, $U$ is an arbitrary function and $u_n$ are either the variables $v_{\alpha,n}$ or $w_{\alpha,n}$ for some $\alpha$). This means that the two terms coming from the third summand on the right hand side of Equation~\eqref{eq:def-quasi-miura} can be written in terms of the operator 
$\sum_{s\geq 1} A^{\beta\xi}_s \d_x^s = \sum_{\alpha} L^\beta_\alpha\circ \d_x\circ \left(L^*\right)^\xi_\alpha$ as
\begin{equation}\label{eq:diff-terms}
-\sum_{\gamma,n} r[t].w_{\gamma,n}\sum_{\gamma,s} \frac{\d A^{\beta\xi}_s}{\d w_{\gamma,n}} \d_x^s.
\end{equation}

\begin{lemma}\label{lem:r-3} Formula~\eqref{eq:diff-terms} is equivalent to the sum of the lines~(II) and~(XII) on the right hand side of Equation~\eqref{eq:r-def-bracket} and the following extra term:
\begin{equation}\label{eq:extra-terms-2}
-
\sum_{l\geq 1}\sum_{i+j=l-1}(-1)^{i+1}(r_l)_{\mu\nu} 
\frac{\d\tilde F}{\d q_{\mu,i}}\sum_{\gamma,n,s}\frac{\d w_{\gamma,n}}{\d q_{\nu,j}} \frac{\d A^{\beta\xi}_s}{\d w_{\gamma,n}} \d_x^s.
\end{equation}
\end{lemma}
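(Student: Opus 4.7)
The plan is to substitute the explicit expansion of $r[t].w_{\gamma,n}=\d_x^n r[t].w_\gamma$ (computed in the proof of Lemma~\ref{lem:r-2}) into formula~\eqref{eq:diff-terms} and to match the resulting summands against the three targets. After combining the two equal contributions that arise when $\d/\d q_{\alpha,0}$ is applied inside the square,
\begin{equation*}
r[t].w_\gamma = \sum_{\ell\ge 1}\sum_{i+j=\ell-1}(-1)^{i+1}(r_\ell)_{\mu\nu}\!\left[\d_x\!\left(\tfrac{\d\tilde F}{\d q_{\mu,i}}\Omega_{\gamma,0;\nu,j}\right)+\tfrac{\hbar}{2}\d_x\Omega_{\gamma,0;\mu,i;\nu,j}\right],
\end{equation*}
so that $r[t].w_{\gamma,n}$ is one further $\d_x^n$ of the right-hand side. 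This splits formula~\eqref{eq:diff-terms} into a two-point block carrying the factor $\d\tilde F/\d q_{\mu,i}$ and a three-point $\hbar$-block.

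The three-point block, after the relabelling $\gamma\mapsto\zeta$, is by direct inspection verbatim line~(XII) of Equation~\eqref{eq:r-def-bracket}, so it requires no further manipulation. For the two-point block one applies the Leibniz rule to $\d_x^{n+1}\!\bigl(\tfrac{\d\tilde F}{\d q_{\mu,i}}\Omega_{\gamma,0;\nu,j}\bigr)=\sum_{p=0}^{n+1}\binom{n+1}{p}\bigl(\d_x^p\tfrac{\d\tilde F}{\d q_{\mu,i}}\bigr)\bigl(\d_x^{n+1-p}\Omega_{\gamma,0;\nu,j}\bigr)$ and treats the cases $p=0$ and $p\ge 1$ separately.

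In the $p=0$ summand, using $\d_x=\d/\d q_{\un,0}$ and the definition $w_{\gamma,n}=\d_x^n\Omega_{\gamma,0;\un,0}$, one rewrites $\d_x^{n+1}\Omega_{\gamma,0;\nu,j}=\d_x^n\Omega_{\gamma,0;\un,0;\nu,j}=\d w_{\gamma,n}/\d q_{\nu,j}$, whereby the resulting contribution is precisely the extra term~\eqref{eq:extra-terms-2}. In the $p\ge 1$ summands the identity $\d_x(\d\tilde F/\d q_{\mu,i})=\Omega_{\un,0;\mu,i}$ (already invoked implicitly in the proof of Lemma~\ref{lem:r-1}) rewrites each term as $\binom{n+1}{a+1}(\d_x^a\Omega_{\un,0;\mu,i})(\d_x^b\Omega_{\gamma,0;\nu,j})$ with $a=p-1$, $b=n+1-p$, $a+b=n$.

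To finish, one has to convert this into line~(II). The antisymmetry $r_\ell^t=(-1)^{\ell+1}r_\ell$ forced by $u(-z)^t+u(z)=0$, combined with $i+j=\ell-1$, shows that the weight $(-1)^{i+1}(r_\ell)_{\mu\nu}$ is invariant under the swap $(\mu,i)\leftrightarrow(\nu,j)$. Performing that swap and then relabelling $a\leftrightarrow b$ in the Leibniz sum turns the binomial into $\binom{n+1}{b+1}=\binom{n+1}{a}$ and reshapes the $\Omega$-factors into the pattern $\d_x^b\Omega_{\un,0;\nu,j}\cdot\d_x^a\Omega_{\mu,i;\gamma,0}$ demanded by line~(II). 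The only real obstacle is precisely this simultaneous bookkeeping step: the $r_\ell$-antisymmetry, the index swap and the relabelling of summation indices all have to be invoked at once to convert the asymmetric weight $\binom{n+1}{a+1}$ into the $\binom{n+1}{a}$ of line~(II); once they are, the remainder of the argument is a routine Leibniz computation.
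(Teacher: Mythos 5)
Your proposal is correct and follows essentially the same route as the paper: expand $r[t].w_{\gamma,n}=\d_x^n r[t].w_\gamma$, peel off the three-point $\hbar$-block as line~(XII), and split the two-point block via the Leibniz rule so that the $p=0$ term yields the extra term~\eqref{eq:extra-terms-2} and the $p\geq 1$ terms yield line~(II). The paper states this last identification without comment, whereas you make explicit the $(\mu,i)\leftrightarrow(\nu,j)$ swap (licensed by $r_\ell^t=(-1)^{\ell+1}r_\ell$) and the $a\leftrightarrow b$ relabelling needed to turn $\binom{n+1}{a+1}$ into $\binom{n+1}{a}$; that bookkeeping is exactly the implicit content of the paper's computation.
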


\begin{proof} The proof is a short straightforward computation. 
Indeed, 
\begin{align} &
-\sum_{\gamma,n,s} r[t].w_{\gamma,n} \frac{\d A^{\beta\xi}_s}{\d w_{\gamma,n}} \d_x^s \\ \notag &
=-\frac{\hbar}{2} 
\sum_{l\geq 1}\sum_{i+j=l-1}(-1)^{i+1}(r_l)_{\mu\nu} 
\sum_{\gamma,n,s}\d_x^{n+1}\Omega_{\mu,i;\nu,j;\gamma,0} \frac{\d A^{\beta\xi}_s}{\d w_{\gamma,n}} \d_x^s 
\\ \notag &
-\sum_{l\geq 1}\sum_{i+j=l-1}(-1)^{i+1}(r_l)_{\mu\nu} \sum_{\gamma,n,s}
\d_x^{n+1}\left(\frac{\d\tilde F}{\d q_{\mu,i}}\Omega_{\nu,j;\gamma,0}\right) \frac{\d A^{\beta\xi}_s}{\d w_{\gamma,n}} \d_x^s \\ \notag &
= \mathrm{(XII)}+\mathrm{(II)} -
\sum_{l\geq 1}\sum_{i+j=l-1}(-1)^{i+1}(r_l)_{\mu\nu} 
\frac{\d\tilde F}{\d q_{\mu,i}}\sum_{\gamma,n,s}\frac{\d w_{\gamma,n}}{\d q_{\nu,j}} \frac{\d A^{\beta\xi}_s}{\d w_{\gamma,n}} \d_x^s.
\end{align}
\end{proof}

\subsection{Proof of Theorem~\ref{thm:r-def-bracket}}\label{sec:proofthmr} The theorem is almost proved above, that is, we have already derived all summand on the right hand side of Equation~\eqref{eq:r-def-bracket}. But in the course of computations we got some additional terms given by formulas~\eqref{eq:extra-terms} and~\eqref{eq:extra-terms-2}. Let us recollect them here:
\begin{align} \label{eq:canterms}
&
\sum_{l\geq 1}\sum_{i+j=l-1}(-1)^{i+1}(r_l)_{\mu\nu} 
\frac{\d\tilde F}{\d q_{\mu,i}}\left( 
\sum_{\gamma,s,n} \frac{\d \left(\frac{\d w_\beta}{\d q_{\nu,j}}\right)}{\d w_{\gamma,n}}\d_x^n \circ  
A^{\gamma\xi}_s \d_x^s  \right.
\\ \notag &
\left. + 
 \sum_{\gamma,s,n} A^{\beta\gamma}_s \d_x^s \circ (-\d_x)^n\circ 
 \frac{\d \left(\frac{\d w_\xi}{\d q_{\nu,j}}\right)}{\d w_{\gamma,n}}
 -\sum_{\gamma,n,s}\frac{\d w_{\gamma,n}}{\d q_{\nu,j}} \frac{\d A^{\beta\xi}_s}{\d w_{\gamma,n}} \d_x^s\right) .
\end{align}

Theorem~\ref{thm:r-def-bracket} follows from the following lemma.
\begin{lemma} The formula~\eqref{eq:canterms} is equal to zero. \end{lemma}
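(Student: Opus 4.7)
The factor $\sum_{l,i,\mu,\nu}(-1)^{i+1}(r_l)_{\mu\nu}\tfrac{\d\tilde F}{\d q_{\mu,i}}$ multiplies the parenthesised operator in~\eqref{eq:canterms}, and its indices $l,i,\mu$ are completely independent of what sits inside the parentheses, which depends only on $\nu$ and $j$. Hence the plan is to show that, for every single pair $(\nu,j)$, writing $q:=q_{\nu,j}$ for brevity, the operator
\[
E:=\sum_{\gamma,s,n}\tfrac{\d(\d w_\beta/\d q)}{\d w_{\gamma,n}}\d_x^n\circ A^{\gamma\xi}_s\d_x^s+\sum_{\gamma,s,n}A^{\beta\gamma}_s\d_x^s\circ(-\d_x)^n\circ\tfrac{\d(\d w_\xi/\d q)}{\d w_{\gamma,n}}-\sum_{\gamma,n,s}\tfrac{\d w_{\gamma,n}}{\d q}\tfrac{\d A^{\beta\xi}_s}{\d w_{\gamma,n}}\d_x^s
\]
is identically zero.

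The third summand of $E$ is, by the chain rule, the coefficient-wise $q$-derivative $-\tfrac{\d A^{\beta\xi}}{\d q}$; applying the Leibniz rule for composition of differential operators to the factorisation $A^{\beta\xi}=\sum_\alpha L^\beta_\alpha\circ\d_x\circ(L^*)^\xi_\alpha$ rewrites it as $-\sum_\alpha \tfrac{\d L^\beta_\alpha}{\d q}\circ\d_x\circ(L^*)^\xi_\alpha-\sum_\alpha L^\beta_\alpha\circ\d_x\circ\tfrac{\d(L^*)^\xi_\alpha}{\d q}$. For the first and second summands I would substitute the same factorisation of $A$ and then apply Lemma~\ref{lem:U-q} with $U=1$ and its formal adjoint. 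With $U=1$ the right hand side of that Lemma vanishes and it reduces to
\[
\sum_\gamma\Bigl[\sum_n\tfrac{\d(\d w_A/\d q)}{\d w_{\gamma,n}}\d_x^n\Bigr]\circ L^\gamma_B=\sum_\gamma L^A_\gamma\circ\Bigl[\sum_n\tfrac{\d(\d v_\gamma/\d q)}{\d v_{B,n}}\d_x^n\Bigr]+\tfrac{\d L^A_B}{\d q},
\]
together with the dual statement for $L^*$. Substituting, the pieces $\tfrac{\d L^\beta_\alpha}{\d q}\circ\d_x\circ(L^*)^\xi_\alpha$ and $L^\beta_\alpha\circ\d_x\circ\tfrac{\d(L^*)^\xi_\alpha}{\d q}$ cancel precisely against the Leibniz expansion of the third term of $E$; after renaming $\alpha\leftrightarrow\gamma$ in the residue coming from the second summand, what survives is
\[
E=\sum_{\alpha,\gamma}L^\beta_\gamma\circ\Bigl\{\Bigl[\sum_n\tfrac{\d(\d v_\gamma/\d q)}{\d v_{\alpha,n}}\d_x^n\Bigr]\circ\d_x+\d_x\circ\Bigl[\sum_n(-\d_x)^n\circ\tfrac{\d(\d v_\alpha/\d q)}{\d v_{\gamma,n}}\Bigr]\Bigr\}\circ(L^*)^\xi_\alpha.
\]

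To finish, I would use the principal hierarchy identity $\d v_\gamma/\d q_{\nu,j}=\d_x\Omega^{[0]}_{\gamma,0;\nu,j}$ and the fact that $\Omega^{[0]}_{\gamma,0;\nu,j}$ depends only on the zero-jet $v_{*,0}$. Lemma~\ref{lem:useful-lemma} collapses the first inner operator to $\d_x\circ\tfrac{\d\Omega^{[0]}_{\gamma,0;\nu,j}}{\d v_\alpha}$, and its formal adjoint (carrying the sign $\sum_n(-\d_x)^n\circ\tfrac{\d(\d_xU)}{\d v_{\gamma,n}}=-\sum_n(-\d_x)^n\circ\tfrac{\d U}{\d v_{\gamma,n}}\circ\d_x$) collapses the second to $-\tfrac{\d\Omega^{[0]}_{\alpha,0;\nu,j}}{\d v_\gamma}\circ\d_x$. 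Thus the brace equals $\d_x\circ\bigl(\tfrac{\d\Omega^{[0]}_{\gamma,0;\nu,j}}{\d v_\alpha}-\tfrac{\d\Omega^{[0]}_{\alpha,0;\nu,j}}{\d v_\gamma}\bigr)\circ\d_x$, which vanishes by the symmetry of $\d^3F_0/\d t_{\alpha,0}\d t_{\gamma,0}\d t_{\nu,j}$ invoked already in~\eqref{eq:triomega}.

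The main obstacle is the bookkeeping of signs and operator orderings: the adjoint of Lemma~\ref{lem:useful-lemma} introduces a minus sign, which is essential for the symmetric cancellation, and the $\alpha\leftrightarrow\gamma$ renaming has to line the two remainders up correctly with the symmetric bilinear form in $(L^\beta_\gamma,\,(L^*)^\xi_\alpha)$. Beyond that, the argument is a pure algebraic rearrangement and uses no input beyond Lemmas~\ref{lem:useful-lemma} and~\ref{lem:U-q} together with the principal hierarchy.
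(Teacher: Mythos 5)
Your proof is correct and is essentially the paper's own argument: both rest on Lemma~\ref{lem:U-q} specialized to $U=1$ (the identity $\Delta L=0$) and on the symmetry~\eqref{eq:triomega} of the third derivative of $F_0$ via Lemma~\ref{lem:useful-lemma} (the content of Lemma~\ref{lem:int2}). The only difference is presentational — you substitute the factorization of $A$ and push everything inward directly, whereas the paper adds the vanishing ``internal'' terms~\eqref{eq:delta-int} and recognizes the total as $\sum_\alpha\Delta L^\beta_\alpha\circ\d_x\circ(L^*)^\xi_\alpha+L^\beta_\alpha\circ\d_x\circ(\Delta L^*)^\xi_\alpha$.
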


\begin{proof} We prove below (see Lemma~\ref{lem:int2}) that 
\begin{equation}\label{eq:delta-int}
-L^{\beta}_\alpha\circ \left( \sum_{p}\frac{\d \left(\frac{\d v_{\alpha,0}}{\d q_{\nu,j}}\right)}{\d v_{\gamma,p}}\circ \d_x^{p}\circ \d_x
+ \d_x\circ \sum_{p} (-\d_x)^p \circ \frac{\d \left(\frac{\d v_{\gamma,0}}{\d q_{\nu,j}}\right)}{\d v_{\alpha,p}}\right)\circ \left(L^*\right)^{\xi}_\gamma=0
\end{equation}
Observe that the sum of these two terms and the three summands in the brackets in formula~\eqref{eq:canterms} gives an expression of exactly the same type as considered in Lemmas~\ref{lem:r-1},~\ref{lem:r-2}, and~\ref{lem:r-3}, with the difference that the operator $r[t].$ is replaced by $\d/\d q_{\nu,j}$. This means that the sum of all these five summands is equal to 
\begin{equation} \label{eq:delta-l}
\sum_\alpha \Delta L^\beta_\alpha \circ \d_x \circ \left(L^*\right)^{\xi}_\alpha + L^\beta_\alpha \circ \d_x \circ 
\left(\Delta L^*\right)^{\xi}_\alpha,
\end{equation}
where $\Delta L^\beta_\alpha$ is equal to the right hand side of Equation~\eqref{eq:def-quasi-miura} with $r[t].$ replaced by $\d/\d q_{\nu,j}$, that is,
\begin{align}
\sum_s \Delta L^\beta_{\alpha,s}\d_x^s & =  
\sum_{\gamma,n} \frac{\d\left(\frac{\d w_{\beta}}{\d q_{\nu,j}}\right)}{\d w_{\gamma,n}}\d_x^n
\circ
\sum_{s=0}^\infty L^\gamma_{\alpha,s}\d_x^s \\ \notag
& -
\sum_{\gamma,s}^\infty L^\beta_{\gamma,s}\d_x^s
\circ
\sum_{n} \frac{\d\left(\frac{\d v_{\gamma}}{\d q_{\nu,j}}\right)}{\d v_{\alpha,n}}\d_x^n \\ \notag
& - \sum_{\gamma,n,s} \frac{\d w_{\gamma,n}}{\d q_{\nu,j}}\frac{\d L^\beta_{\alpha,s}}{\d w_{\gamma,n}}\d_x^s.
\end{align}
Lemma~\ref{lem:U-q} used for $U=1$ and $q=q_{\nu,j}$ implies that $\Delta L^\beta_\alpha$ is equal to zero. Therefore, formula~\eqref{eq:delta-l} is equal to zero. Then, since formula~\eqref{eq:delta-int} is shown below to be equal to zero, we obtain that the sum of the three summands in the brackets in formula~\eqref{eq:canterms} is equal to zero, and, therefore, the whole formula~\eqref{eq:canterms} vanishes as well. 
\end{proof}

The still missing ingredient of our computations is the following lemma that implies that the formula~\eqref{eq:delta-int} is equal to zero.
\begin{lemma}\label{lem:int2}
We have:
\begin{equation}\label{eq:int2}
\sum_{p}\frac{\d \left( \frac{\d v_{\alpha,0}}{\d q_{\nu,j}}\right)}{\d v_{\beta,p}}\circ \d_x^{p}\circ \d_x
+ \d_x\circ \sum_{p} (-\d_x)^p \circ \frac{\d \left(\frac{\d v_{\beta,0}}{\d q_{\nu,j}}\right)}{\d v_{\alpha,p}} =0.
\end{equation}
\end{lemma}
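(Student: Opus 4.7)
The plan is to follow the same strategy as in the proof of Lemma~\ref{lem:r-1}, but in a substantially simpler setting where no antisymmetry coming from the Givental element is required. The starting point will be the identity
\begin{equation*}
\frac{\d v_{\alpha,0}}{\d q_{\nu,j}} \;=\; \d_x\!\left(\Omega^{[0]}_{\alpha,0;\nu,j}(v)\right),
\end{equation*}
obtained by commuting the partial derivatives $\d/\d t_{\un,0}$ and $\d/\d t_{\nu,j}$ applied to $F_0$ and using that on the topological solution $\d/\d t_{\un,0}=\d_x$; here $\Omega^{[0]}_{\alpha,0;\nu,j}(v)$ is the genus-zero two-point function, which by the remark preceding the construction of the principal hierarchy depends only on the variables $v_{\gamma,0}$. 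This expression will play the role of the explicit formula for $r[t].v_{\alpha,0}$ at the start of the proof of Lemma~\ref{lem:r-1}.

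With this substitution in hand, the first summand in Equation~\eqref{eq:int2} can be rewritten using Lemma~\ref{lem:useful-lemma} (to pull the outer $\d_x$ to the left) as $\d_x\circ(\d\Omega^{[0]}_{\alpha,0;\nu,j}/\d v_\beta)\circ\d_x$; only $p=0$ and $p=1$ contribute, since $\Omega^{[0]}_{\alpha,0;\nu,j}(v)$ depends only on $v_{\gamma,0}$. For the second summand I would perform the analogous computation using the formal adjoint of Lemma~\ref{lem:useful-lemma}: the $(-\d_x)^p$ factors cause a cancellation between the $p=0$ contribution $\sum_\gamma v_{\gamma,1}\,\d^2\Omega^{[0]}_{\beta,0;\nu,j}/\d v_\gamma\,\d v_\alpha$ and the total-derivative part of the $p=1$ contribution $-\d_x\circ(\d\Omega^{[0]}_{\beta,0;\nu,j}/\d v_\alpha)$, leaving $-\d_x\circ(\d\Omega^{[0]}_{\beta,0;\nu,j}/\d v_\alpha)\circ\d_x$, with the sign opposite to that of the first summand.

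Adding the two contributions reduces the left hand side of Equation~\eqref{eq:int2} to
\begin{equation*}
\d_x\circ\left(\frac{\d\Omega^{[0]}_{\alpha,0;\nu,j}}{\d v_\beta}-\frac{\d\Omega^{[0]}_{\beta,0;\nu,j}}{\d v_\alpha}\right)\circ\d_x.
\end{equation*}
Both bracketed expressions are equal to the genus-zero three-point primary function $\Omega^{[0]}_{\alpha,0;\beta,0;\nu,j}(v)$, which is manifestly symmetric in $\alpha\leftrightarrow\beta$: this is the direct analog of the symmetry~\eqref{eq:triomega} used at the end of the proof of Lemma~\ref{lem:r-1}. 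The difference therefore vanishes.

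The main obstacle I expect is the careful sign bookkeeping in the second step, where one must verify that $(-\d_x)^p$ combined with the outer $\d_x$ genuinely produces a minus sign in front of $\d\Omega^{[0]}_{\beta,0;\nu,j}/\d v_\alpha$; once this is done the final cancellation follows immediately, and unlike in Lemma~\ref{lem:r-1} there is no need to appeal to an antisymmetric summation over $i+j=\ell-1$.
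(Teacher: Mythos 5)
Your proposal is correct and follows essentially the same route as the paper's own proof: substitute $\d v_{\alpha,0}/\d q_{\nu,j}=\d_x\Omega^{[0]}_{\alpha,0;\nu,j}$, use Lemma~\ref{lem:useful-lemma} (and its adjoint, or the equivalent hand computation you describe) together with the fact that $\Omega^{[0]}_{\alpha,0;\nu,j}$ depends only on the $v_{\gamma,0}$ to reduce both summands to $\pm\,\d_x\circ(\d\Omega^{[0]}/\d v)\circ\d_x$, and conclude by the symmetry~\eqref{eq:triomega}. The sign bookkeeping you flag works out exactly as you expect, and indeed no antisymmetrization over $i+j=\ell-1$ is needed here.
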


\begin{proof} The proof is analogous to the proof of Lemma~\ref{lem:r-1}. Indeed, using Lemma~\ref{lem:useful-lemma}, we see that
\begin{equation}
\sum_{p}\frac{\d \left( \frac{\d v_{\alpha,0}}{\d q_{\nu,j}}\right)}{\d v_{\beta,p}}\circ \d_x^{p}
= \d_x\circ \sum_{p}\frac{\d \Omega^{[0]}_{\alpha,0;\nu,j}}{\d v_{\beta,p}}\circ \d_x^{p}
= \d_x \circ \frac{\d \Omega^{[0]}_{\alpha,0;\nu,j}}{\d v_{\beta,0}}.
\end{equation}
Meanwhile, in the same way,
\begin{equation}
\sum_{p} (-\d_x)^p \circ \frac{\d \left(\frac{\d v_{\beta,0}}{\d q_{\nu,j}}\right)}{\d v_{\alpha,p}}
=\frac{\d \Omega^{[0]}_{\beta,0;\mu,i}}{\d v_{\alpha,0}}\circ (-\d_x).
\end{equation}
Therefore, the left hand side of Equation~\eqref{eq:int2} is equal to
\begin{equation}
\d_x\circ \left( \frac{\d \Omega^{[0]}_{\alpha,0;\nu,j}}{\d v_{\beta,0}} - \frac{\d \Omega^{[0]}_{\beta,0;\nu,j}}{\d v_{\alpha,0}} \right) \circ \d_x,
\end{equation}
which is equal to zero because of Equation~\eqref{eq:triomega}.
\end{proof}


\subsection{The $s$-deformation formula} In this Section we prove Equation~\eqref{eq:s-deformation-bracket} using Theorem~\ref{thm:s-def-quasi-miura}.

\begin{theorem}\label{thm:s-def-bracket} Let $L$ be the operator of quasi-Miura tranformation discussed in Section~\ref{sec:r-def-L}. Then the 
deformation formula 
\begin{equation}\label{eq:s-def-bracket-L}
\sum_\alpha s[w].L^\beta_\alpha \circ \d_x \circ \left(L^*\right)^\xi_\alpha 
+ \sum_\alpha L^\beta_\alpha \circ \d_x \circ \left(s[w].L^*\right)^\xi_\alpha  
\end{equation}
is equal to the right hand side of Equation~\eqref{eq:s-deformation-bracket}.
\end{theorem}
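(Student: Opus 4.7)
The plan is a direct Leibniz computation using Theorem~\ref{thm:s-def-quasi-miura}, significantly shorter than the $r$-case because no intermediate cancellations arise. First, I would extend Theorem~\ref{thm:s-def-quasi-miura} to the formal adjoint $(L^*)^\xi_\alpha = \sum_{s\geq 0}(-\d_x)^s \circ L^\xi_{\alpha,s}$. Since $s[w].$ acts only on the coefficients $L^\xi_{\alpha,s}$, Theorem~\ref{thm:s-def-quasi-miura} applied coefficient-wise gives
\[
s[w].(L^*)^\xi_\alpha = -\sum_\delta (s_1)_{\delta,\un} \sum_{s\geq 0}(-\d_x)^s \circ \frac{\d L^\xi_{\alpha,s}}{\d w_{\delta,0}}.
\]

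The crucial observation is that the derivation $\d/\d w_{\delta,0}$ commutes with $\d_x$: from the explicit formula $\d_x = \sum_{\gamma,k\geq 0} w_{\gamma,k+1}\,\d/\d w_{\gamma,k}$ one checks $[\d_x,\d/\d w_{\delta,0}] = 0$, since $\d w_{\gamma,k+1}/\d w_{\delta,0} = 0$ for all $\gamma$ and $k\geq 0$. Hence the right-hand side above can be rewritten as $-\sum_\delta (s_1)_{\delta,\un}$ times $\d/\d w_{\delta,0}$ acting coefficient-wise on $(L^*)^\xi_\alpha$, in complete parallel with the statement of Theorem~\ref{thm:s-def-quasi-miura} for $L$ itself.

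Substituting both Theorem~\ref{thm:s-def-quasi-miura} and its adjoint analog into Equation~\eqref{eq:s-def-bracket-L}, and using the same commutation to move $\d/\d w_{\delta,0}$ past the middle $\d_x$, the expression becomes
\[
-\sum_\delta (s_1)_{\delta,\un}\left(\sum_\alpha \frac{\d L^\beta_\alpha}{\d w_{\delta,0}}\circ \d_x\circ (L^*)^\xi_\alpha + \sum_\alpha L^\beta_\alpha \circ \d_x \circ \frac{\d (L^*)^\xi_\alpha}{\d w_{\delta,0}}\right),
\]
and by the Leibniz rule for the derivation $\d/\d w_{\delta,0}$ this equals $-\sum_\delta (s_1)_{\delta,\un}\,\frac{\d}{\d w_{\delta,0}}\sum_\alpha L^\beta_\alpha \circ \d_x \circ (L^*)^\xi_\alpha$. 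Reading off the coefficient of $\d_x^s$ in $A^{\beta\xi}_s$ on both sides reproduces exactly the right-hand side of Equation~\eqref{eq:s-deformation-bracket}.

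There is effectively no obstacle: the entire argument rests on the commutation $[\d_x,\d/\d w_{\delta,0}]=0$, which is immediate from the definition of $\d_x$. Unlike the $r$-case treated in Sections~\ref{sec:internalr}--\ref{sec:proofthmr}, there are no ``internal'', ``external'', or ``extra'' contributions to be assembled and cancelled separately, because Theorem~\ref{thm:s-def-quasi-miura} already presents $s[w].L$ as the action of a single derivation that distributes trivially over the composition defining $A$.
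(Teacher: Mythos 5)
Your argument is correct and is essentially the paper's own proof: replace $s[w].$ by the derivation $-\sum_\nu(s_1)_{\nu,\un}\,\d/\d w_{\nu,0}$ via Theorem~\ref{thm:s-def-quasi-miura}, use that this derivation commutes with $\d_x$ (hence passes to the adjoint and through the middle $\d_x$), and apply the Leibniz rule to recognize the result as the derivation acting on $A^{\beta\xi}=\sum_\alpha L^\beta_\alpha\circ\d_x\circ(L^*)^\xi_\alpha$. The only difference is that you spell out the adjoint step and the commutator check explicitly, which the paper leaves implicit.
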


\begin{proof} Theorem~\ref{thm:s-def-quasi-miura} implies that we can replace $s[w].$ in the formula~\eqref{eq:s-def-bracket-L} by $-\sum_\nu(s_1)_{\nu,\un}\d/\d w_{\nu,0}$. Since the operator $\d/\d w_{\nu,0}$ commutes with $\d_x$, we see that the formula~\eqref{eq:s-def-bracket-L} is equal to
\begin{equation}
-\sum_\nu(s_1)_{\nu,\un}\frac{\d}{\d w_{\nu,0}}\left(L^\beta_\alpha \circ \d_x \circ \left(L^*\right)^\xi_\alpha \right),
\end{equation}
which is exactly the right hand side of Equation~\eqref{eq:s-deformation-bracket}.
\end{proof}




\end{document}